\newcommand\fs@algorithm{%
	\def\@fs@pre{\hrule height 1pt depth0pt\kern2pt}%
	\def\@fs@mid{\hrule height 1pt \kern2pt}%
	\def\@fs@post{\kern2pt\hrule height 1pt\relax}%
	\let\@fs@iftopcapt\iftrue}
	\def\@fs@cfont{\normalfont\small\bfseries} 
	\let\@fs@capt\floatc@plain  
\DeclareFontFamily{U}{mathx}{}
\DeclareFontShape{U}{mathx}{m}{n}{<-> mathx10}{}
\DeclareSymbolFont{mathx}{U}{mathx}{m}{n}
\DeclareMathAccent{\widecheck}{0}{mathx}{"71}
\def\x{{\mathbf x}}
\def\y{{\mathbf y}}
\def\z{{\mathbf z}}
\def\n{{\mathbf n}}
\def\B{{\mathbf B}}
\def\C{{\mathbf C}}
\def\D{{\mathbf D}}
\def\L{{\mathbf L}}
\def\R{{\mathbb R}}
\def\I{{\mathbf I}}
\def\P{{\mathbf P}}
\def\E{{\mathbf E}}
\newcommand{\argmin}{\mathop{\mathrm{argmin}}\limits}
\def\th@myStyle{
	\thm@headfont{\itshape} 
	\thm@notefont{\itshape} 
	\thm@headpunct{:}                  
	\renewcommand\@upn{\textit}
	\normalfont 
}
\theoremstyle{myStyle} 
\newtheorem{Ex}{Example}
\newtheorem{Rem}{Remark}
\newtheorem{theorem}{Theorem}
\newtheorem{prop}{Proposition}
\newtheorem{lemma}{Lemma}
\def\thline{\noalign{\hrule height 1pt}}
\begin{document}
\bstctlcite{IEEEexample:BSTcontrol} 

\title{Sparsity-Enhanced Multilayered Non-Convex Regularization with Epigraphical Relaxation \\for Debiased Signal Recovery}
%
\author{Akari~Katsuma, 
	Seisuke~Kyochi,~\IEEEmembership{Member,~IEEE,}
	Shunsuke~Ono,~\IEEEmembership{Senior Member,~IEEE,}
	and~Ivan~Selesnick,~\IEEEmembership{Fellow,~IEEE}
	\thanks{S. Kyochi is with Kogakuin University.}
	\thanks{A. Katsuma and S. Ono are with Tokyo Institute of Technology.}
	\thanks{I. Selesnick is with New York University.}
	\thanks{This work was supported by JSPS Grants-in-Aid (21K04045, 22H03610, 22H00512, 23H01415, 23K17461, 23K24866, 24K07481) and ROIS NII Open Collaborative Research 2024 (24S0108).}
	\thanks{Manuscript received XXX XX, XXXX.}
}

\markboth{Journal of \LaTeX\ Class Files,~Vol.~14, No.~8, August~2021}%
{Shell \MakeLowercase{\textit{et al.}}: A Sample Article Using IEEEtran.cls for IEEE Journals}


\maketitle

\begin{abstract}
This paper proposes a precise signal recovery method with multilayered non-convex regularization, enhancing sparsity/low-rankness for high-dimensional signals including images and videos. In optimization-based signal recovery, multilayered convex regularization functions based on the $\ell_1$ and nuclear-norms not only guarantee a global optimal solution but also offer more accurate estimation than single-layered ones, thanks to their faithful modeling of structured sparsity and low-rankness in high-dimensional signals. However, these functions are known to yield biased solutions (estimated with smaller amplitude values than the true ones). To address this issue, multilayered non-convex regularization functions have been considered, although they face their own challenges: 1) their closed-form proximity operators are unavailable, and 2) convergence may result in a local optimal solution. In this paper, we resolve the two issues with an approach based on epigraphical relaxation (ER). First, ER decomposes a multilayered non-convex regularization function into the outermost function and epigraph constraints for the inner functions, facilitating the computation of proximity operators. Second, the relaxed regularization functions by ER are integrated into a non-convexly regularized convex optimization model to estimate a global optimal solution with less bias. Numerical experiments demonstrate the bias reduction achieved by the proposed method in image recovery and principal component analysis.
\end{abstract}

\begin{IEEEkeywords}
	Convex optimization, signal recovery, epigraphical projection, LiGME model
\end{IEEEkeywords}

\section{Introduction}
\label{sec:intro}
\IEEEPARstart{S}{ignal} recovery, such as denoising, deburring, interpolation, and compressed sensing reconstruction, is a crucial task in signal processing and machine learning. One of the fundamental methods for signal recovery is the optimization-based approach \cite{Combettes2005,Combettes2008,Wen2018} that solves an optimization problem carefully-modeled with the prior information of the target signal, such as sparsity and/or low-rankness. An exact evaluation of these properties can be achieved by introducing non-convex regularization functions ({\it{regularizers}}), such as the $\ell_0$ pseudo-norm and rank. However, non-convex optimization employing these regularizers leads to a local optimal solution depending on initial values. Moreover, the $\ell_0$ pseudo-norm/rank minimization problem is known to be NP-hard\cite{Natarajan1995}. Therefore, the convexly-relaxed regularizers, such as the $\ell_1$-norm and the nuclear-norm, have widely been employed in sparse/low-rank signal recovery\cite{Starck2010,Candes2011}.

Convexly-relaxed regularizers ensure convergence to a global optimizer; however, compared with non-convex regularizers, they tend to more strongly penalize high-amplitude components of input signals, and thus, result in biased solutions with underestimation on their high-amplitude components. To circumvent this problem, convexity-preserving non-convex regularizers (CPNRs) have attracted attention\cite{SelesnickGMC2017,Yin2019,Cai2020,Abe2020,Yata2022,Chen2021,Yukawa2023,Chen2023,Al-Shabili2021,Zhang2023}. Unlike standard non-convex regularizers\cite{Donoho1994,Blumensath2008,Chartrand2007,Fan2001,Woodworth2016,Beren1997,Mesbahi1997,Nie2012,Hu2013}, CPNRs not only reduce the bias but also guarantee convergence to a global optimal solution (see Sec. \ref{sec:relatedWork2}). 

As a different perspective from the convexity or non-convexity of regularizers, multilayered composite functions have attracted attention for accurate recovery of high dimensional signals (e.g., images and videos), to characterize complex properties often inherent in high-dimensional signals, e.g., group-sparsity/low-rankness in spatial, spectral, and time domains. In the past several decades, group sparsity/low-rankness-aware multilayered regularizers have been developed based on convexly-relaxed regularizers mostly as two-layered ones, e.g., total variation (TV) \cite{Rudin1992, Bresson2008, Chan2010, Ono2014DVTV} with the sum of the $\ell_2$-norms (the $\ell_{2,1}$-norm) and structure-tensor TV (STV)\cite{Bredies2010,Lefkimmiatis2015,Chierchia2014,Ono2016} with the sum of the nuclear-norms, and achieved better accuracy in high-dimensional signal recovery than single-layered ones thanks to their sophisticated prior modeling.

These developments motivate us to realize multilayered (three or even more layered) CPNRs for more accurate high-dimensional signal recovery with less bias than the multilayered convex regularizers. However, the conventional CPNRs have limited capability in handling multilayered regularizers. In fact, the conventional studies have mainly focused on single or two-layered regularizers (the $\ell_1$-norm\cite{SelesnickGMC2017,Al-Shabili2021,Abe2020,Yata2022,Zhang2023}, the nuclear-norm\cite{Yin2019,Abe2020,Al-Shabili2021}, and the grouped $\ell_{2,1}$-norm\cite{Zhang2023}) with the closed-form proximity operators, for one and two-dimensional signal recovery. As the number of layers in a multilayered composite function increases, it is more difficult to find the closed-form proximity operator of the function ({\it{non-proximable}}), which limits the degree of freedom for designing multilayered regularizers. Consequently, the conventional CPNRs with limited layers fail to fully realize their potential in effectively capturing the structured properties of high-dimensional signals.

This paper proposes multilayered non-convex regularization by epigraphical-relaxation (ER) \cite{Kyochi2021}, and develops a new CPNR mitigating the limitation on the number of layers to enhance the accuracy of high-dimensional signal recovery. ER decomposes a multilayered regularizer into the outermost function and the constraints given from the inner functions, and makes the multilayered regularization problem tractable as long as each proximity operator is available, even if the multilayered regularizer is non-proximable. As practical applications of our method, we apply it to image recovery tasks and principal component analysis.

\subsection{Contribution}\label{sec:contribution}
The contributions of this paper are fourfold.
\begin{enumerate}
	\item{
	We extend the conventional ER technique for convex optimization problems to non-convex ones with
 guaranteeing the identical global optimal solution set, under the {\it{strictly increasing}} property as detailed in Sec. \ref{sec:ERNCR}.}
	\item{We establish a new model integrating the linearly-involved generalized Moreau-enhanced (LiGME) model and ER (termed ER-LiGME model) that enables us to use multilayered non-convex regularization in a convex optimization. As it will be discussed, the straightforward integration of the LiGME model and the ER technique cannot achieve non-convexification of the seed function. This work shows that {\it{guided observation extension}} (GOE) in the ER-LiGME model can integrate the LiGME method and the ER technique as shown in Sec. \ref{subsec:ERLiGME}. 
	}
	\item{Our ER-LiGME model also contributes to significantly simplify the overall convexity condition, compared with the conventional LiGME models\cite{Abe2020, Yata2022, Yukawa2023, Chen2021,Chen2023}. Thanks to this, the design of our LiGME penalty can avoid the computationally expensive procedures in high-dimensional signal recovery, such as large scale inverse matrix calculation, LDU decomposition, and singular or eigen value decomposition suggested in the conventional LiGME models\cite{Abe2020, Chen2021, Yata2022, Yukawa2023,Chen2023}.
	}
	\item{We provide two practical examples of multilayered non-convex regularization: the ER-LiGME DSTV and ASNN regularizers, which are the non-convexification of the DSTV and ASNN regularizers proposed in \cite{Kyochi2021}, for compressed image sensing and robust principal component analysis (RPCA)\cite{Candes2011} applications respectively. In both applications, experimental results show that the proposed regularizers not only achieves better accuracy of sparse signal recovery but also avoids biased signal estimation. 
	}
\end{enumerate}
In the preliminary work of this paper \cite{Katsuma2023}, our discussion was limited to the integration of ER and LiGME model, whereas this paper extends the discussion to include broader multilayered non-convex regularization problems and the preservation of global optimal solutions using and not using the ER. As a further advancement, this paper newly provides an application of the ER-LiGME ASNN regularizer in RPCA.

\subsection{Related Work}\label{sec:relatedWork}
This section reviews conventional non-convex regularizers for signal processing, not limited to signal recovery.

\subsubsection*{Standard Non-convex Regularizers}\label{sec:relatedWork1}
Non-convex regularizers for promoting sparsity include $\ell_0$ pseudo-norm\cite{Donoho1994,Blumensath2008}, $\ell_p$-norm ($0<p<1$)\cite{Chartrand2007}, smoothly clipped absolute deviation (SCAD)\cite{Fan2001}, and $\widetilde{p}$-shrinkage\cite{Woodworth2016}. 
On the other hand, low-rankness-promoting non-convex regularizers include rank\cite{Beren1997,Mesbahi1997}, Shatten-$p$-norm\cite{Nie2012}, and truncated nuclear-norm\cite{Hu2013}.
In addition, multilayered non-convex regularizers that consider structured-sparsity have been proposed, e.g., $\ell_{2,0}$-norm\cite{Cai2013}, $\ell_{2,p}$-norm\cite{Wang2018} and capped $\ell_{2,1}$-norm\cite{Ma2018}. These regularizers lose the convexity of the optimization problem, and thus, a global optimal solution is not guaranteed.

\subsubsection*{Convexity-preserving Non-convex Regularizers}\label{sec:relatedWork2}
CPNRs are the generalized minimax concave (GMC)\cite{SelesnickGMC2017,Yin2019,Cai2020} and the LiGME penalty functions\cite{Abe2020,Yata2022,Chen2021,Yukawa2023,Chen2023}, the sharpening sparse regularizer (SSR)\cite{Al-Shabili2021}, and the partially smoothed difference-of-convex (pSDC) regularizer\cite{Zhang2023}. These penalties are converted from a convex regularization function, called a \textit{seed function}\cite{Yukawa2023}, into a sparsity-enhanced non-convex regularizer by subtracting its (generalized) Moreau envelope with carefully chosen parameters to ensure the overall convexity condition of the cost function.
The LiGME penalty generalizes the GMC one with the $\ell_1$ seed function to the one with an arbitrary convex seed function, involving linear operators. The SSR generalizes the $\ell_2$-norm  employed in the GMC and LiGME penalties as a smoothing function so that we can choose other smoothing functions to adjust the shape of non-convex regularizers. Additionally, the pSDC regularizer has been proposed for a broader class of CPNR models, which can handle general convex data fidelity terms and constraints, and multiple regularizers. It opens up new applications that have not been possible with the conventional CPNRs, such as Poisson noise reduction\cite{Zhang2023}. The minimization problems with these penalties are ensured to converge to a global optimal solution. However, since the existing CPNRs rely on computing the proximity operator of the seed function, they cannot handle non-proximable multilayered regularization.

\section{Preliminaries}
\label{sec:Prelim}
\begingroup 
\begin{table}[t]
	\caption{Basic notations.}
	\label{tab:notations}
	\begin{tabular}{>{\centering\arraybackslash}p{13.5em}|>{\centering\arraybackslash}p{15em}}
		\thline
		Notation & Terminology \\
		\thline
		
		$\mathbb{N}$, $\mathbb{C}$, $\mathbb{R}$ and $\mathbb{R}_{++}$ &
		\begin{tabular}{c}
			Natural, complex, real, and \\nonnegative real
			numbers
		\end{tabular}\\ \hline
		
		
		$\mathcal{A}^{N},\ \mathcal{A}^{N_r\times N_c}$ &
		\begin{tabular}{c}
			\scriptsize
			$N$-dimensional, $N_r\times N_c$-dimensional\\ vectors/matrices \\ with elements in $\mathcal{A} \subset\mathbb{R}$
		\end{tabular}\\ \hline
		
		$\mathbf{X}_{[N]}$, $\mathbf{X}_{[M,N]}$ &
		\begin{tabular}{c} $N\times N$-dimensional matrix, \\ $M\times N$-dimensional matrix 
		\end{tabular}\\ \hline
		
		$\mathbf{I}$, $\mathbf{O}$ & Identity matrix, zero matrix \\ \hline
		
		\raisebox{-1pt}{$\mathbf{X}^\top$} & Transpose of a matrix $\mathbf{X}$ \\ \hline
		
		$x_n$ and $[\mathbf{x}]_n$ & $n$-th element of a vector $\mathbf{x}$ \\ \hline
		
		$x_{m,n}$ and $[\mathbf{X}]_{m,n}$ & $(m,n)$-th element of a matrix $\mathbf{X}$ \\ \hline
		
		\raisebox{-1pt}{$\mathrm{vec}(\mathbf{X})\in \mathbb{R}^{MN}$} &\raisebox{-1pt}{ Vectorization of $\mathbf{X}\in\mathbb{R}^{M \times N}$} \\ \hline
		
		$\mathrm{vec}^{-1}_{(M,N)}(\mathbf{x})\in\mathbb{R}^{M \times N}$ & Matricization of $\mathbf{x}\in \mathbb{R}^{MN}$ \\ \hline
		
		\begin{tabular}{l}\hspace{-0.5em}
			\scriptsize
			$\mathrm{diag}(a_1,\ldots, a_{N}) \in \mathbb{R}^{N \times N}$, \\
			\scriptsize\hspace{-1.75em}
			$\mathrm{diag}(\mathbf{A}_1,\ldots, \mathbf{A}_{N})$	$\in \mathbb{R}^{\sum M_i \times \sum N_i}$
		\end{tabular} &
		\begin{tabular}{c}
			Diagonal/Block-diagonal matrix \\
			\scriptsize
			$(\mathbf{A}_i \in \mathbb{R}^{M_i \times N_i})$
		\end{tabular} \\ \hline
		
		$\lambda_{\mathrm{max}}^{++}(\mathbf{X})$ & Largest eigenvalue of $\mathbf{X}$ \\ \hline
		
		$\sigma_n(\mathbf{X})$ & $n$-th largest singular value of $\mathbf{X}$ \\ \hline
		
		$\|\mathbf{X}\|_{\ast}$ &
		\begin{tabular}{c}
			Nuclear norm \\
			$\|\mathbf{X}\|_{\ast} = \sum_{n=1}^{\min\{M,N\}} \sigma_n(\mathbf{X})$
		\end{tabular} \\ \hline
		
		$\|\mathbf{x}\|_{p}\quad (p \in [1 , \infty ))$&
		\begin{tabular}{c}
			$\ell_p$-norm, \\
			$\|\mathbf{x}\|_{p} = \left(\sum^{N}_{n=1} |x_n|^p\right)^{\frac{1}{p}}$
		\end{tabular} \\ \hline		
		
		$\|\mathbf{X}\|_{p}\quad (p \in [1 , \infty ))$&
		\begin{tabular}{c}
			$\ell_p$-matrix norm \\
			\scriptsize
			\hspace{-1em}
			$\|\mathbf{X}\|_{p} = \left(\sum^{M}_{m=1}\sum^{N}_{n=1} |x_{m,n}|^p\right)^{\frac{1}{p}}$
		\end{tabular} \\ \hline
		
		$\|\mathbf{x}\|_{p,q}^{(N)}\quad (p,\ q \in [1 , \infty ))$ &
		\begin{tabular}{c}
			$\ell_{p,q}$-mixed norm \\
			\hspace{-0.5em}\scriptsize
			$\mathbf{x} = \begin{bmatrix}
				\mathbf{x}_1^\top & \mathbf{x}_2^\top & \cdots & \mathbf{x}_{K}^\top 
			\end{bmatrix}^\top \hspace{-0.5em}\in \mathbb{R}^{KN}$ \\
			$\|\mathbf{x}\|_{p,q}^{(N)}=\left(\sum^{K}_{k=1}\|\mathbf{x}_k\|_p^q\right)^{\frac{1}{q}}$
		\end{tabular} \\ \hline
				
		$\mathfrak{B}_p(\y,\,\varepsilon) \quad (p,\ q \in [1 , \infty ))$ & 
		\begin{tabular}{c}
			$\ell_{p}$-ball centered at $\y$ \\
			$\{\x\ |\ \|\x - \y\|_p\leq\varepsilon\}$
		\end{tabular} \\ \hline
		
		$\mathbf{x} \leq \mathbf{y}\ (\mathbf{x},\ \mathbf{y}\in \mathbb{R}^{N})$ & $x_i \leq y_i, (1 \leq \forall i \leq N)$ \\ 
		\hline
		
		$\mathbf{x} \lneqq_{\mathcal{I}} \mathbf{y}\ (\mathbf{x},\ \mathbf{y}\in \mathbb{R}^{N},N\geq2)$ & 
		\begin{tabular}{c}
			$x_i < y_i, (i \in \mathcal{I}\subset\{1,...,N\})$
			\\
			$x_i = y_i, (i \in \{1,...,N\}\backslash\mathcal{I}) $
		\end{tabular}
		\\ 
		\thline
	\end{tabular}
\end{table}
\endgroup

This section reviews the basic tools for convex optimization used in this paper (for more detailed information, see \cite{Bauschke2011}). Mathematical notations are summarized in Table \ref{tab:notations}. Boldfaced large and small letters are matrices and vectors, respectively.

\subsection{Convex Scalar/Vector-valued Function}
A scalar/vector-valued function $f:\mathbb{R}^{N} \rightarrow \mathbb{R}^M$ is said to be a convex function \cite{Boyd2004} if and only if, for $\forall \mathbf{x},\ \mathbf{y} \in \mathbb{R}^{N}$ and $\forall \alpha \in [0,1]$, 
\begin{align}\label{eq:conv_vecconv}
	f(\alpha \mathbf{x} + (1-\alpha)\mathbf{y}) \leq \alpha f(\mathbf{x}) + (1-\alpha)f(\mathbf{y}).
\end{align}
As a special class of vector functions, for a vector with non-overlapped $K$ blocks $\mathbf{x} = \begin{bmatrix}
	\mathbf{x}_1^\top & \ldots & \mathbf{x}_{K}^\top \end{bmatrix}^\top \in \mathbb{R}^{N}$ ($\mathbf{x}_k \in \mathbb{R}^{N_k}$, $1 \leq k \leq K$, $N = \sum_{k=1}^{K} N_k$), we define \textit{block-wise vector function} $f:\mathbb{R}^{N} \rightarrow \mathbb{R}^K$ with functions for each block $f_k:\mathbb{R}^{N_k} \rightarrow \mathbb{R}$ as $
	f(\mathbf{x}) =  \begin{bmatrix}
		f_1(\mathbf{x}_1) & \ldots & f_{K}(\mathbf{x}_{K})
	\end{bmatrix}^\top.$
If a block-wise vector function satisfies convexity, we refer to it as \textit{convex block-wise vector function}. 

\subsection{Epigraph}\label{subsec:epi}
An epigraph\cite{Beck2017} of a function $f:\mathbb{R}^N\rightarrow\mathbb{R}^{M}$ is a subset of the product space $\mathbb{R}^N \times \mathbb{R}^{M}$ defined by 
\begin{align}
	\mathrm{epi}_f := \{ (\mathbf{x},\,\boldsymbol{\xi}) \in \mathbb{R}^N \times \mathbb{R}^{M}\  |\ f(\mathbf{x}) \leq \boldsymbol{\xi} \}. 
\end{align}
If $f \in\Gamma_0(\mathbb{R}^N)$ (the set of proper lower semicontinuous convex functions \cite{Bauschke2011}), then $\mathrm{epi}_f$ is a convex set.

\subsection{Proximity Operator and Projection onto Convex Set}
The proximity operator for a function $f:\mathbb{E}\subset \mathbb{R}^N\rightarrow \mathbb{R}$ and an index $\gamma \in \R_{++}$, denoted as $\mathrm{prox}_{\gamma f}:\mathbb{E} \rightarrow \mathbb{E}$, is defined by
\begin{align}\label{eq:Def_prox}
	\mathrm{prox}_{\gamma f}(\mathbf{x}) := \argmin_{\mathbf{y}\in \mathbb{E}} \gamma f(\mathbf{y}) + \frac{1}{2} \|\mathbf{x}-\mathbf{y}\|^2_2.
\end{align}
Projection onto a convex set $C \subset \mathbb{R}^N$, denoted as $\mathcal{P}_C(\mathbf{x}):\mathbb{R}^{N} \rightarrow \mathbb{R}^{N}$, is one of the cases of the proximity operator, derived as
$
\mathrm{prox}_{\gamma \iota_C}(\mathbf{x}) := \argmin_{\mathbf{y}\in \mathbb{R}^N} \gamma \iota_C(\mathbf{y}) + \frac{1}{2} \|\mathbf{x}-\mathbf{y}\|^2_2 = \argmin_{\mathbf{y} \in C} \|\mathbf{x}-\mathbf{y}\|_2^2 =: \mathcal{P}_C(\mathbf{x}),
$
where $\iota_C$ is the indicator function defined by $\iota_{C}(\mathbf{x}) = 0\ (\mathbf{x} \in C)$ and $\iota_{C}(\mathbf{x}) = \infty\ (\mathbf{x} \notin C)$.

Examples of the proximity operators are listed
in Table \ref{tab:complexity}. Although we assume the convexity for functions in general, similar to how the proximity operator for the $\ell_0$ pseudo-norm can be computed using the hard-thresholding operator\cite{Donoho1994}, the proximity operators for many non-convex functions can also be computed in closed-form\cite{Woodworth2016,Marjanovic2012,Xu2012}.

\begingroup 
\begin{table}[t]
	\caption{Proximity operator and computational complexity for $N$- and $M \times N$-dimensional vectors/matrices.}
	\label{tab:complexity}
	\centering
	{
		\footnotesize
		\begin{tabular}{>{\centering\arraybackslash}p{5.5em}|>{\arraybackslash\hspace{-0.5em}}p{15.5em}>{\arraybackslash}p{6em}}
			\thline
			Function $f$ & \begin{tabular}{c} Proximity operator $\widetilde{\mathbf{x}} = \mathrm{prox}_{\gamma f}(\mathbf{x})$\end{tabular} & \scriptsize \begin{tabular}{c}Computational\\complexity\end{tabular} \\ \thline
			$\|\cdot\|_{2}$  & \begin{tabular}{l} $\widetilde{\mathbf{x}} = \left(1 - \frac{\gamma}{\max\{\|\mathbf{x}\|_2,\gamma\}}\right)\mathbf{x}$ \cite{Beck2017}. \end{tabular}& \scriptsize\hspace{1em}$\mathcal{O}(N).$ \\\hline
			$\|\cdot\|_{1}$  & \begin{tabular}{l} Soft-thresholding $\widetilde{\mathbf{x}} = \mathcal{T}_\gamma(\mathbf{x})$ \cite{Beck2017} \\ \scriptsize\hspace{-0.25em}$\left( [\mathcal{T}_\gamma(\mathbf{x})]_n = \mathrm{sign}(x_n)\max\{|x_n|-\gamma, 0\}\right) $. \end{tabular}& \scriptsize\hspace{1em}$\mathcal{O}(N).$\\\hline
			$\iota_{\mathfrak{B}_1(\y,\,\varepsilon)}(\cdot)$ & \begin{tabular}{l} Using projection onto the simplex \cite{Duchi2008}.
			\end{tabular}& \scriptsize$\hspace{1em}\mathcal{O}(N\log N).$\\\hline
			$\|\cdot\|_{\infty}$  & \begin{tabular}{l} $\widetilde{\mathbf{x}} = \mathbf{x}-\mathrm{prox}_{\iota_{\mathcal{B}_1(\mathbf{0},1)}}(\mathbf{x}/\gamma)$ \cite{Beck2017}.
			\end{tabular} & \scriptsize$\hspace{1em}\mathcal{O}(N\log N).$\\\hline
			\begin{tabular}{l}  $\|\mathbf{x}\|_{2,1}$ \end{tabular} &  \begin{tabular}{l} Block-wise proximity operator of $\|\cdot\|_2$ \\ $( \mathbf{x} = \begin{bmatrix} \mathbf{x}_1^\top & \cdots & \mathbf{x}_{N}^\top \end{bmatrix}^\top,$\\$\widetilde{\mathbf{x}}_n = \mathrm{prox}_{\gamma \|\cdot\|_2} (\mathbf{x}_n))$.  \end{tabular} & \scriptsize\hspace{1em}$\mathcal{O}(N).$
			\\\hline
			$\|\mathbf{X}\|_{\ast}$  & \begin{tabular}{l} Soft-thresholding on singular values \\ of $\mathbf{X}\in\R^{M\times N}$ \cite{Beck2017} \\\scriptsize\hspace{-0.25em}$( \mathbf{X}=\mathbf{U}\mathrm{diag}({\bm \sigma})\mathbf{V}^\top,\ {\bm \sigma} = \begin{bmatrix} \sigma_1\ \ldots \ \sigma_r \end{bmatrix}^\top\hspace{-1em}, $\\\scriptsize $ \widetilde{\mathbf{X}} = \mathbf{U}\mathrm{diag}(\mathcal{T}_\gamma({\bm \sigma}))\mathbf{V}^\top) $.
			\end{tabular} & \scriptsize\begin{tabular}{l}
			$\mathcal{O}(MN $\\$\min\{M,N\}).$
			\end{tabular} \\\hline
			$\iota_{[a,b]^N}(\cdot)$ & \begin{tabular}{l}
				$\widetilde{\x}_n = \min\{ \max\{ \x_n , a \} , b \}$ \cite{Beck2017}
			\end{tabular}.&\scriptsize\hspace{1em}$\mathcal{O}(N)$.\\\hline
			$\hspace{-0.5em}\iota_{\mathrm{epi}_{\tau\|\cdot\|_2}}(\mathbf{x},\xi)$ & \begin{tabular}{l}\scriptsize\hspace{-0.25em}$
			  	(\widetilde{\mathbf{x}},\widetilde{\xi})
			  	=\begin{cases}
			  		(\mathbf{x},\xi) & (\textcolor{black}{\tau\|\mathbf{x}\|_2 \leq \xi})\\
			  		(\mathbf{0},0) & (\|\mathbf{x}\|_2 < -\tau\xi)\\
			  		\alpha\left(\mathbf{x},\tau\|\mathbf{x}\|_2\right) & (\mathrm{otherwise})
			  	\end{cases}
			  $, \\ $\alpha = \frac{1}{1+\tau^2}\left(1+\frac{\tau \xi}{\|\mathbf{x}\|_2}\right)$ \cite{Beck2017}.
			  \end{tabular}&\scriptsize\hspace{1em}$\mathcal{O}(N).$\\\hline
			$\hspace{-0.25em}\iota_{\mathrm{epi}_{\|\cdot\|_1}}(\mathbf{x},\xi)$ & \begin{tabular}{l}\scriptsize\hspace{-0.25em}$
			  (\widetilde{\mathbf{x}},\widetilde{\xi}) =\begin{cases}
			  	(\mathbf{x},\xi) & (\|\mathbf{x}\|_1 \leq \xi) \\
			  	(\mathcal{T}_{\gamma^\star}(\mathbf{x}),\xi+\gamma^\star) & (\|\mathbf{x}\|_1 > \xi) 
			  \end{cases}
			  $, \\ $\gamma^\star$ is any positive root of \\ $\varphi(\gamma) :=\|\mathcal{T}_{\gamma}(\mathbf{x})\|_1 - \gamma - \xi\ $ \cite{Beck2017}.
			  \end{tabular} &  \scriptsize\hspace{1em}$\mathcal{O}(N\mathrm{log}N)$.\\\hline
			$\hspace{-0.5em}\iota_{\mathrm{epi}_{\|\cdot\|_\ast}}(\mathbf{X},\xi)$ & \begin{tabular}{l} Using projection onto the $\mathrm{epi}_{\|\cdot\|_1}$ \\ on singular values of $\mathbf{X}\in\R^{M\times N}$ \cite{Beck2017} \\ \scriptsize
			( $\mathbf{X}=\mathbf{U}\mathrm{diag}({\bm \sigma})\mathbf{V}^\top$,\\\scriptsize $(\mathbf{\bm \sigma}^\star,\xi^\star) = \mathrm{prox}_{\iota_{\mathrm{epi}_{\|\cdot\|_{1}}}}(\mathbf{\bm \sigma},\xi)$, \\\scriptsize
			$(\widetilde{\mathbf{X}},\widetilde{\xi})=(\mathbf{U}\mathrm{diag}({\bm \sigma}^\star)\mathbf{V}^\top , \xi^\star)$ ).
			\end{tabular}& \scriptsize\begin{tabular}{l}
			$\mathcal{O}(MN $\\$\min\{M,N\}).$
			\end{tabular} \\\thline
		\end{tabular}
	}
	\vspace{-0.4cm}
\end{table}
\endgroup 

\subsection{Constrained Linearly-Involved Generalized Moreau-Enhanced Model}
\label{sec:LiGME}
Among CPNRs, we adopt the constrained LiGME (cLiGME) model\cite{Yata2022} in our method, owing to its convexity-preserving property and applicability for optimization with epigraph constraints. This section reviews the cLiGME model. For a convex regularizer (seed function) $\Psi \in \Gamma_0(\mathbb{R}^{N_{\boldsymbol{\mathfrak{L}}}})$ and $\boldsymbol{\mathfrak{L}} \in \mathbb{R}^{N_{\boldsymbol{\mathfrak{L}}}\times N}$, the constrained nonconvexly-regularized convex optimization with the LiGME penalty $\Psi_{\mathbf{B}}$ is formulated as follows:
\begin{align}
	&\argmin_{\mathbf{x} \in \mathbb{R}^N,\mathbf{Cx} \in \mathcal{C}_0}\frac{1}{2}\|\boldsymbol{\Phi}\mathbf{x} -\mathbf{y}\|_2^2 + \mu {\Psi}_{\mathbf{B}}(\boldsymbol{\mathfrak{L}}\mathbf{x}),\nonumber\\
	=&\,\argmin_{\mathbf{x} \in \mathbb{R}^N}\frac{1}{2}\|\boldsymbol{\Phi}\mathbf{x} -\mathbf{y}\|_2^2 + \mu \Psi_{\mathbf{B}}(\boldsymbol{\mathfrak{L}}\mathbf{x}) + \iota_{\mathcal{C}_0}(\mathbf{C}\mathbf{x}),\label{eq:cLiGMEreg}\\
	&{\Psi}_{\mathbf{B}}(\cdot) := {\Psi}(\cdot) - \min_{\mathbf{v}}\left\{ {\Psi}(\mathbf{v}) + \frac{1}{2}\|\mathbf{B}(\cdot - \mathbf{v})\|_2^2\right\},\nonumber
\end{align}
where $\mathbf{B} \in \mathbb{R}^{N_\B\times N_{\boldsymbol{\mathfrak{L}}}}$ controls the strength of non-convexity of the LiGME penalty, $\mathbf{C} \in \mathbb{R}^{N_\mathbf{C} \times N}$, and $\mathcal{C}_0 \subset \mathbb{R}^{N_\mathbf{C}}$. 
To preserve the overall convexity property of \eqref{eq:cLiGMEreg}, the matrices $\boldsymbol{\Phi}$,  $\mathbf{B}$, and $\boldsymbol{\mathfrak{L}}$ should satisfy the following positive semi-definiteness:
\begin{align}\label{eq:convcond}
	\boldsymbol{\Phi}^\top\boldsymbol{\Phi} - \mu \boldsymbol{\mathfrak{L}}^\top\mathbf{B}^\top\mathbf{B}\boldsymbol{\mathfrak{L}} \succeq \mathbf{O}.
\end{align}
The solver of \eqref{eq:cLiGMEreg} is summarized in Algorithm \ref{alg:cLiGME}, where $\mathrm{prox}_{f\oplus\iota_{\mathcal{C}_0}}(\mathbf{w}^{(1)}, \mathbf{w}^{(2)}) = (\mathrm{prox}_{f}(\mathbf{w}^{(1)}), \mathcal{P}_{\mathcal{C}_0}(\mathbf{w}^{(2)}))$, and the parameters $\sigma$ and $\tau$ should satisfy the certain condition in \cite{Yata2022}.

\begin{algorithm}[t]
	\caption{Solver of cLiGME model \eqref{eq:cLiGMEreg}.}
	\label{alg:cLiGME}
	{\footnotesize
	\begin{algorithmic}[1]
		\STATE Define $\mathbf{B}$ satisfying the condition in \eqref{eq:convcond}.
		\STATE Set $k=0$ and choose initial parameters $\mathbf{x}_0$,  $\mathbf{v}_0$,  $\mathbf{w}_0$.
		\STATE Set ${\mathbf{p}}_{0} = \begin{bmatrix}
			\mathbf{x}_{0}^\top & \mathbf{v}_{0}^\top & \mathbf{w}_{0}^\top
		\end{bmatrix}^\top.$
		\STATE \algorithmicdo
		\STATE $\quad\widetilde{\mathbf{x}}_{k} \leftarrow  \left[ \mathbf{I} - \frac{1}{\sigma}\left(\boldsymbol{\Phi}^\top\boldsymbol{\Phi}-\mu \boldsymbol{\mathfrak{L}}^\top\mathbf{B}^\top\mathbf{B}\boldsymbol{\mathfrak{L}}\right)\right]\mathbf{x}_k$
		\STATE $\quad{\mathbf{x}}_{k+1} \leftarrow  \widetilde{\mathbf{x}}_{k} - \frac{\mu}{\sigma}\boldsymbol{\mathfrak{L}}^\top\mathbf{B}^\top\mathbf{B}\mathbf{v}_k -  \frac{\mu}{\sigma}\begin{bmatrix}
			\boldsymbol{\mathfrak{L}}^\top & \C^\top
		\end{bmatrix}\mathbf{w}_k + \frac{1}{\sigma}\boldsymbol{\Phi}^\top\mathbf{y}$
		\STATE  $\quad\widetilde{\mathbf{v}_{k}}\leftarrow \left(\mathbf{I} - \frac{\mu}{\tau}\mathbf{B}^\top\mathbf{B} \right) \mathbf{v}_{k}$
		\STATE  $\quad{\mathbf{v}}_{k+1} \leftarrow  \mathrm{prox}_{\frac{\mu}{\tau}{\Psi}}\left[ \frac{2\mu}{\tau}\mathbf{B}^\top\mathbf{B}\boldsymbol{\mathfrak{L}}\mathbf{x}_{k+1} - \frac{\mu}{\tau}\mathbf{B}^\top\mathbf{B}\boldsymbol{\mathfrak{L}}\mathbf{x}_{k} - \widetilde{\mathbf{v}_{k}}\right]$
		\STATE $\quad{\mathbf{w}}_{k+1} \leftarrow  \left(\mathbf{I}-\mathrm{prox}_{{\Psi}\oplus\iota_{\mathcal{C}_0}}\right)\left[ \begin{bmatrix}
		\boldsymbol{\mathfrak{L}}^\top & \C^\top
		\end{bmatrix}^\top
		(2\mathbf{x}_{k+1} - \mathbf{x}_{k}) + \mathbf{w}_{k}\right]$
		\STATE $\quad{\mathbf{p}}_{k+1} \leftarrow  \begin{bmatrix}
			\mathbf{x}_{k+1}^\top & \mathbf{v}_{k+1}^\top & \mathbf{w}_{k+1}^\top
		\end{bmatrix}^\top$
		\STATE $\quad k \leftarrow  k+1$
		\STATE \algorithmicwhile$\ \|{\mathbf{p}}_{k} - {\mathbf{p}}_{k-1}\|_2 > \varepsilon_{\mathrm{stop}}$
		\STATE Output $\mathbf{x}_{k}$.
	\end{algorithmic}
	}
\end{algorithm}

\section{Multilayered Non-convex Regularization with Epigraphical Relaxation for Signal Recovery}\label{sec:ERNCR}
In this section, we propose a signal recovery method enhanced by multilayered non-convex regularization with ER. Firstly, Sec. \ref{sec:basicSRform} formulates a general multilayered regularization problem. In order to handle it by the existing optimization algorithms, Sec. \ref{sec:conventionalER} extends the ER technique for multilayered convex regularization problems to the non-convex one with theoretical guarantee of the equivalence between the global optimal solution sets with and without the ER, under a milder condition compared to \cite{Kyochi2021}.

\subsection{Problem Formulation for Signal Recovery}\label{sec:basicSRform}
Signal recovery can be formulated as an inverse problem aiming to estimate a signal of interest $\mathbf{x} \in \mathbb{R}^N$ from noisy observation $\mathbf{y}\in\R^M$ modeled as
\begin{align}\label{eq:observation}
	\y = \boldsymbol{\Phi}\x + \n,
\end{align}
where $\boldsymbol{\Phi} \in \mathbb{R}^{M\times N}$ is the degradation process (e.g., blurring, downsampling, and so on) and $\mathbf{n} \in \mathbb{R}^M$ is the additive noise, e.g., Gaussian/sparse noise. In certain signal recovery scenarios, such as compressed sensing, the inverse problem tends to be ill-posed ($M<N$). To achieve robust estimation of $\x$ by incorporating its prior information, e.g., group-sparsity/low-rankness, we translate it into the minimization problem including a $K$-layered composite (possibly non-convex) regularization function $\mathcal{R}^{(1,K)} = \mathcal{R}^{(K)} \circ \mathcal{R}^{(K-1)} \circ \cdots \circ \mathcal{R}^{(1)}: \mathbb{R}^{N_1} \rightarrow \mathbb{R}_{++}$ ($\mathcal{R}^{(k)}:\mathbb{R}^{N_k} \rightarrow \mathbb{R}_{++}^{N_{k+1}}$, $1 \leq k \leq K,\ \mathbb{R}_{++}^{N_{K+1}} = \mathbb{R}_{++})$ with the regularization parameter $\mu\in\mathbb{R}_{++}$ as
\begin{align}\label{eq:genform}
	\mathcal{S}_{\mathbf{x}} =\argmin_{\mathbf{x}\in \mathcal{C} \subset \mathbb{R}^{N}} \mathcal{U}(\x) \left( := \mathcal{F}_\y(\mathbf{\Phi}\mathbf{x}) + \mu \mathcal{R}^{(1,K)}(\mathbf{A}\mathbf{x})\right),
\end{align}
where $\mathcal{U}:\mathcal{C} \rightarrow \mathbb{R}_{++}$, $\mathbf{A}\in \mathbb{R}^{N_1 \times N}$, and  $\mathcal{F}_\mathbf{y}\in\Gamma_0(\R^{M})$ is a coercive function (i.e., if $\|\x\|\rightarrow\infty$, then $\mathcal{F}_\y(\x)\rightarrow\infty$) consisting of the sum of proximable functions $\mathcal{F}_\y(\mathbf{\Phi}\x):=\sum^Q_{q=1} \mathcal{F}_{\y}^{(q)}(\mathbf{\Phi}_q \x)\, (\mathcal{F}_{\y}^{(q)}:\R^{M_q}\rightarrow \R_{++},\, \mathbf{\Phi}_q\in\R^{M_q\times N},\, M=\sum^Q_{q=1} M_q$), e.g., $\mathcal{F}_\y(\mathbf{\Phi}\mathbf{x})=\tfrac{1}{2}\|\mathbf{\Phi}\mathbf{x}-\y\|_2^2$. An example of $\mathcal{R}^{(1,K)}$ is the $\ell_{2,1}$-norm employed in TV regularization (see Sec. \ref{sec:problemERLiGME}), where $\|\cdot\|_{2,1} = \mathcal{R}^{(2)} \circ \mathcal{R}^{(1)}$, $\mathcal{R}^{(1)}$ is a {\it{block-wise}} $\ell_2$-norm and $\mathcal{R}^{(2)} = \|\cdot\|_1$.

Proximal splitting algorithms, e.g., the alternating direction method of multipliers (ADMM), and primal-dual splitting (PDS) methods\cite{Gabay1976,Beck2009,Chambolle2010,Condat2013,Vu2013}, are mainstream strategies for solving the convex optimization problem in \eqref{eq:genform} with a convex regularizer, if the closed-form proximity operator for $\mathcal{R}^{(1,K)}$ is available. Even for a non-convex one, various studies, such as \cite{Wang2019} and \cite{Bot2020}, have proposed the algorithms based on the proximity operator for finding stationary (or Karush-Kuhn-Tucker (KKT)) points under certain settings and conditions.

\subsection{Epigraphical Relaxation for Non-convex Multilayered Regularizer}\label{sec:conventionalER}
It is often the case that the proximal operators of multilayered non-convex regularization functions cannot be computed in closed-form (or can be computed but not trivially). To address this issue, we extend ER\cite{Kyochi2021}, which was originally proposed for convex multilayered regularization problems, to non-convex ones. ER enables us to handle the multilayered regularization problem by reformulating it into a relaxed problem as follows.
First, we recursively introduce an auxiliary variables $\{\mathbf{z}^{(K)},...,\mathbf{z}^{(2)}\}$ (denoted as $\{\mathbf{z}^{(k)}\}^2_{k=K}$) into the inner functions of $\mathcal{R}^{(1,K)}(\mathbf{A}\mathbf{x})$ in Problem \eqref{eq:genform} as $\mathbf{z}^{(K)} = \mathcal{R}^{(K-1)}(\mathbf{z}^{(K-1)}),\,\mathbf{z}^{(K-1)} = \mathcal{R}^{(K-2)}(\mathbf{z}^{(K-2)}),...,\mathbf{z}^{(2)} = \mathcal{R}^{(1)}(\mathbf{A}\mathbf{x})$ and reformulate the problem as
\begin{align}\label{eq:reform2}
	&\mathcal{S}_{\mathbf{x}}\times\prod_{k=K}^{2}\mathcal{S}_{\mathbf{z}^{(k)}}  := \argmin_{( \mathbf{x}, \{\mathbf{z}^{(k)}\}_{k=K}^{2}) \in \mathcal{D} } \mathcal{V}\left( \mathbf{x}, \{\mathbf{z}^{(k)}\}_{k=K}^{2}\right),\nonumber\\ 
	&\mathcal{V}( \mathbf{x}, \{\mathbf{z}^{(k)}\}_{k=K}^{2}):=\mathcal{F}_\y(\mathbf{\Phi x}) + \mu \mathcal{R}^{(K)}(\mathbf{z}^{(K)}), \nonumber\\ 
	& \mathcal{D} := \left\lbrace (\mathbf{x}, \{\mathbf{z}^{(k)}\}_{k=K}^{2})\,|\,\x\in\mathcal{C},\,\z^{(2)} = \mathcal{R}^{(1)}(\mathbf{A}\mathbf{x}), \right. \nonumber\\ &\left. \hspace{4em}\z^{(k+1)} =\mathcal{R}^{(k)}(\z^{(k)})\,(2\leq k \leq K-1) \right\rbrace,
\end{align}
where $\mathcal{V}:\mathcal{D}\rightarrow\R_{++}$. Subsequently, by relaxing all of the above equality constraints to the in inequality ones with the epigraph as $\z^{k+1}\geq \mathcal{R}^{(k)}(\z^{(k)}) \Leftrightarrow (\z^{k},\,\z^{k+1})\in \mathrm{epi}_{\mathcal{R}^{(k)}}$, we finally obtain the epigraphically-relaxed problem as follows:
\begin{align}\label{eq:reform4}
	&\widetilde{\mathcal{S}}_{\mathbf{x}}\times\prod_{k=K}^{2}\widetilde{\mathcal{S}}_{\mathbf{z}^{(k)}}  := \argmin_{( \mathbf{x}, \{\mathbf{z}^{(k)}\}_{k=K}^{2}) \in \widetilde{\mathcal{D}} } \widetilde{\mathcal{V}}\left( \mathbf{x}, \{\mathbf{z}^{(k)}\}_{k=K}^{2}\right),\nonumber\\ 
	&\widetilde{\mathcal{V}}( \mathbf{x}, \{\mathbf{z}^{(k)}\}_{k=K}^{2}):=\mathcal{F}_\y(\mathbf{\Phi x}) + \mu \mathcal{R}^{(K)}(\mathbf{z}^{(K)}), \nonumber\\ 
	& \widetilde{\mathcal{D}} := \left\lbrace (\mathbf{x}, \{\mathbf{z}^{(k)}\}_{k=K}^{2})\,|\,\x\in\mathcal{C},\,(\mathbf{A}\mathbf{x},\mathbf{z}^{(2)}) \in \mathrm{epi}_{\mathcal{R}^{(1)} }, \right. \nonumber\\ &\left. \hspace{4em}(\mathbf{z}^{(k)},\mathbf{z}^{(k+1)}) \in \mathrm{epi}_{\mathcal{R}^{(k)} } \ (2 \leq k \leq K-1)) \right\rbrace,
\end{align}
where $\widetilde{\mathcal{V}}:\widetilde{\mathcal{D}}\rightarrow\R_{++}$. As long as both the proximity operator for the outermost regulatizer $\mathcal{R}^{(K)}$ and the projection onto each epigraphical constraint are available, we can tackle the relaxed problem by proximal splitting algorithms.

Although the relaxed problem in \eqref{eq:reform4} is not the same problem as \eqref{eq:genform}, it has been proven that the set of the global minimizers of the relaxed problem is equivalent to that of the original problem in \eqref{eq:genform}, under the assumptions \textcolor{black}{(A1) and (A2)}, in \cite{Kyochi2021}.
\begin{itemize}
	\item[(A1)] 
	The $k$-th layer function $\mathcal{R}^{(k)}$ $(1 \leq \forall k \leq K)$ is a convex function ($k=K$) or convex block-wise vector functions ($1\leq k \leq K-1$).
	\item[(A2)] 
	The $k$-th layer function $\mathcal{R}^{(k)}$ ($2 \leq \forall k \leq K$) is \textit{strictly increasing (scalar/vector) function}\footnote{A function $f:C \rightarrow \mathbb{R}$ ($C \subset \mathbb{R}^N$) is said to be a strictly increasing function on $C$ if for any $\mathbf{x} , \mathbf{y} \in C$ satisfying $\mathbf{x} \leq \mathbf{y}$ and $x_{n} < y_{n}$ for some $1\leq n \leq N$, then $f(\mathbf{x}) < f(\mathbf{y})$. A block-wise vector function $f:\mathbb{R}^N \rightarrow \mathbb{R}^M$
	is said to be a strictly increasing block-wise vector function if all of the block functions $f_{m} : \mathbb{R}^{N_m} \rightarrow \mathbb{R}$ ($1\leq m\leq M$) are strictly increasing.} on $\mathbb{R}_{++}^{N_{k}}$.
\end{itemize}

On the other hand, in this paper, we analyze the case that $\mathcal{R}^{(1,K)}$ is a non-convex composite function, i.e., it satisfies only the assumption (A2), and derive the equivalence of the solution sets in \eqref{eq:genform} and \eqref{eq:reform4} without the assumption (A1) as in Proposition \ref{Theorem1}. Proposition \ref{Theorem1} is proved using the following lemma.

\begin{lemma}\label{Lemma1}
	$\forall( {\mathbf{x}}^\star, \{{\mathbf{z}}^{\star(k)}\}_{k=K}^{2})\in\widetilde{\mathcal{S}}_{\mathbf{x}}\times\prod_{k=K}^{2}\widetilde{\mathcal{S}}_{\mathbf{z}^{(k)}}\,(\subset\widetilde{\mathcal{D}})$ in \eqref{eq:reform4} satisfies the following equations:
	\begin{align}\label{eq:lemma1}
		\mathcal{R}^{(1)}(\mathbf{A}{\mathbf{x}}^\star) = {\mathbf{z}}^{\star(\textcolor{black}{2})},\quad\mathcal{R}^{(k)}({\mathbf{z}}^{\star(k)})=\textcolor{black}{{\mathbf{z}}^{\star(k+1)},}
	\end{align}
	where $2\leq k\leq K-1$. Thus, $( {\mathbf{x}}^\star, \{{\mathbf{z}}^{\star(k)}\}_{k=K}^{2})\in{\mathcal{D}}$.
\end{lemma}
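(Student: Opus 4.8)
The plan is to prove the lemma by a \emph{saturation-and-perturbation} argument: I would show that at any global minimizer of the relaxed problem \eqref{eq:reform4} every epigraph inequality must in fact hold with equality, because any slack could be removed to strictly decrease the objective via the strict monotonicity (A2). Concretely, starting from an arbitrary relaxed minimizer $(\x^\star,\{\z^{\star(k)}\}_{k=K}^{2})$, I would introduce the \emph{tightened} companion point $(\x^\star,\{\bar\z^{(k)}\}_{k=K}^{2})$ that retains $\x^\star$ but forces all constraints to equality from the inside out: $\bar\z^{(2)} := \mathcal{R}^{(1)}(\A\x^\star)$ and $\bar\z^{(k+1)} := \mathcal{R}^{(k)}(\bar\z^{(k)})$ for $2 \le k \le K-1$. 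By construction this companion satisfies the equality constraints of \eqref{eq:reform2}, hence a fortiori the epigraph inequalities of \eqref{eq:reform4}, so it is feasible for the relaxed problem and carries the identical data-fidelity value $\mathcal{F}_\y(\boldsymbol{\Phi}\x^\star)$. Only the regularization term can differ between the two points.

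Next I would establish, by a forward induction on $k$, the componentwise domination $\bar\z^{(k)} \le \z^{\star(k)}$ for every $k$. The base case $\bar\z^{(2)} \le \z^{\star(2)}$ is exactly the membership $(\A\x^\star,\z^{\star(2)})\in\mathrm{epi}_{\mathcal{R}^{(1)}}$. For the step, I would combine feasibility $\mathcal{R}^{(k)}(\z^{\star(k)}) \le \z^{\star(k+1)}$ with the (plain) monotonicity of $\mathcal{R}^{(k)}$, which follows from (A2) applied to $\bar\z^{(k)} \le \z^{\star(k)}$, giving $\bar\z^{(k+1)} = \mathcal{R}^{(k)}(\bar\z^{(k)}) \le \mathcal{R}^{(k)}(\z^{\star(k)}) \le \z^{\star(k+1)}$. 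In particular $\bar\z^{(K)} \le \z^{\star(K)}$. Since $\mathcal{R}^{(K)}$ is strictly increasing and $\mu>0$, if these differed in any component the companion point would attain a strictly smaller objective, contradicting optimality of $(\x^\star,\{\z^{\star(k)}\})$; hence $\z^{\star(K)} = \bar\z^{(K)}$.

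Finally I would propagate this top-level equality downward. Assuming $\z^{\star(j)} = \bar\z^{(j)}$ for some $3 \le j \le K$, feasibility yields $\mathcal{R}^{(j-1)}(\z^{\star(j-1)}) \le \z^{\star(j)} = \bar\z^{(j)} = \mathcal{R}^{(j-1)}(\bar\z^{(j-1)})$, while monotonicity together with $\bar\z^{(j-1)} \le \z^{\star(j-1)}$ gives the reverse inequality, so $\mathcal{R}^{(j-1)}(\z^{\star(j-1)}) = \mathcal{R}^{(j-1)}(\bar\z^{(j-1)})$; the \emph{strict} increase of $\mathcal{R}^{(j-1)}$ then excludes any strictly larger component and forces $\z^{\star(j-1)} = \bar\z^{(j-1)}$. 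The induction descends to $\z^{\star(2)} = \bar\z^{(2)} = \mathcal{R}^{(1)}(\A\x^\star)$, which with the intermediate equalities is precisely \eqref{eq:lemma1}, and membership in $\mathcal{D}$ is then immediate. I expect the main obstacle to be the bookkeeping in the block-wise vector setting: I must verify that the domination inequalities and the strict-monotonicity contradictions are invoked block-by-block consistently with the footnote definition, and that all auxiliary iterates remain in $\R_{++}^{N_k}$ (which holds because each $\mathcal{R}^{(k)}$ maps into $\R_{++}$) so that (A2) applies legitimately. Notably, convexity (A1) is never used, which is the point of the milder hypothesis.
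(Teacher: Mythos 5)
Your proposal is correct, and it reaches the conclusion by a genuinely different route from the paper. The paper argues by contradiction, locally: it assumes the first (from the top) violated equality occurs at layer $k$, scales the offending coordinates of $\z^{\star(k+1)}$ down by factors $\varepsilon_i\in(0,1)$ chosen to stay above $\mathcal{R}^{(k)}(\z^{\star(k)})$, and then cascades this one-coordinate perturbation \emph{upward} through the strictly increasing layers until it strictly decreases $\mathcal{R}^{(K)}$, contradicting optimality. You instead construct in one shot the fully saturated companion $(\x^\star,\{\bar\z^{(k)}\})$, prove componentwise domination $\bar\z^{(k)}\le\z^{\star(k)}$ by forward induction, force $\bar\z^{(K)}=\z^{\star(K)}$ by comparing objectives, and then propagate the equality \emph{downward} by a rigidity induction using strict monotonicity. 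Both proofs rest on the same two ingredients (feasibility of a tightened competitor and (A2)), but yours avoids the $\varepsilon$-bookkeeping and the case split over which layer fails first, and is arguably cleaner; the paper's local perturbation, on the other hand, transfers more directly to the proof of Proposition \ref{Theorem2}, where the extra term $\tfrac{\rho}{2}\|\z^{(K)}-\mathbf{z}_{\mathrm{R}}\|_2^2$ means one must not push $\z^{(K)}$ below $\mathbf{z}_{\mathrm{R}}$ — full saturation could then increase that term, so your construction would need the same kind of guarded $\varepsilon$-choice the paper already uses. Your closing caveats (block-wise application of strict monotonicity, and the range condition ensuring (A2) is invoked on its stated domain) are exactly the right points to check, and they do go through.
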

\begin{proof}
	see Appendix \ref{ap:poof_lemma1}.
\end{proof}

\begin{prop}[Equivalence of the solutions between Problem \eqref{eq:genform} and \eqref{eq:reform4}]\label{Theorem1}
	Let $\mathcal{S}_{\mathbf{x}}$ be the set of the global minimizers of a $K$-layered regularization problem in \eqref{eq:genform} where $\mathcal{R}^{(1,K)}$ is a non-convex (block-wise vector) function, and $\widetilde{\mathcal{S}}_{\mathbf{x}}$ be the set of the global minimizers of the corresponding epigraphically-relaxed problem in \eqref{eq:reform4}. If $\mathcal{R}^{(1,K)}$ satisfies the assumption (A2), then
	${\mathcal{S}}_{\mathbf{x}} =\widetilde{\mathcal{S}}_{\mathbf{x}} $.
\end{prop}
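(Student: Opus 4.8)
The plan is to route the argument through the intermediate equality-constrained reformulation \eqref{eq:reform2}, and then to let Lemma \ref{Lemma1} do the crucial work of showing that passing from the equality constraints to the epigraph (inequality) constraints changes neither the optimal value nor the optimal $\x$-set. Since Lemma \ref{Lemma1} is already in hand, the proof of Proposition \ref{Theorem1} reduces to careful bookkeeping on the three feasible sets and their common objective.

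First I would record that on the equality-constrained set $\mathcal{D}$ the auxiliary variables are completely pinned down by $\x$: substituting $\z^{(2)} = \mathcal{R}^{(1)}(\A\x)$ and $\z^{(k+1)} = \mathcal{R}^{(k)}(\z^{(k)})$ recursively gives $\z^{(K)} = \mathcal{R}^{(K-1)}\circ\cdots\circ\mathcal{R}^{(1)}(\A\x)$, so that $\mathcal{R}^{(K)}(\z^{(K)}) = \mathcal{R}^{(1,K)}(\A\x)$ and hence $\mathcal{V} \equiv \mathcal{U}$ along $\mathcal{D}$. Consequently the map $\x \mapsto (\x,\{\z^{(k)}\})$ is a value-preserving bijection between $\{\x\in\mathcal{C}\}$ and $\mathcal{D}$, which makes \eqref{eq:reform2} a faithful reformulation of \eqref{eq:genform}: the two share the same optimal value, and the $\x$-projection of the argmin of \eqref{eq:reform2} is exactly $\mathcal{S}_{\x}$.

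Next I would compare \eqref{eq:reform2} with the relaxed problem \eqref{eq:reform4}. Because the epigraph relaxation only weakens the equalities to inequalities while leaving the objective untouched, we have $\mathcal{D}\subseteq\widetilde{\mathcal{D}}$ and $\widetilde{\mathcal{V}}=\mathcal{V}$, so minimizing over the larger set yields $\min_{\widetilde{\mathcal{D}}}\widetilde{\mathcal{V}} \leq \min_{\mathcal{D}}\mathcal{V}$. The reverse inequality is precisely where Lemma \ref{Lemma1} is invoked: it guarantees that every relaxed minimizer $(\x^\star,\{\z^{\star(k)}\})$ activates all epigraph constraints and therefore lies in $\mathcal{D}$, so its value equals $\mathcal{V}(\x^\star,\{\z^{\star(k)}\}) \geq \min_{\mathcal{D}}\mathcal{V}$. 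The two bounds force equality of the optimal values of \eqref{eq:reform2} and \eqref{eq:reform4}, which in turn equals the optimal value of \eqref{eq:genform} by the previous step.

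Finally the set identity $\mathcal{S}_{\x}=\widetilde{\mathcal{S}}_{\x}$ follows by two inclusions. For $\widetilde{\mathcal{S}}_{\x}\subseteq\mathcal{S}_{\x}$, Lemma \ref{Lemma1} places any relaxed minimizer in $\mathcal{D}$, where $\widetilde{\mathcal{V}}=\mathcal{U}$ attains the common optimal value, so its $\x$-component minimizes $\mathcal{U}$ over $\mathcal{C}$. For $\mathcal{S}_{\x}\subseteq\widetilde{\mathcal{S}}_{\x}$, given $\x^\star\in\mathcal{S}_{\x}$ I would lift it by defining $\z^{\star(k)}$ through the chain of equalities; the resulting point is feasible for $\widetilde{\mathcal{D}}$ and attains the optimal value, hence is a relaxed minimizer. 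The one genuinely nontrivial ingredient is Lemma \ref{Lemma1}, whose proof is where assumption (A2) does all the work: strict monotonicity of the outer layers means any slack epigraph constraint could be tightened to strictly decrease $\mathcal{R}^{(K)}(\z^{(K)})$, contradicting optimality. This monotonicity argument uses no convexity, which is exactly why (A1) can be dropped here compared with \cite{Kyochi2021}.
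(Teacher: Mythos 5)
Your proposal is correct and follows essentially the same route as the paper's proof: both pass through the equality-constrained reformulation \eqref{eq:reform2}, use the inclusion $\mathcal{D}\subseteq\widetilde{\mathcal{D}}$ together with the identification of $\mathcal{V}$ with $\mathcal{U}$ on $\mathcal{D}$, and invoke Lemma \ref{Lemma1} to place every relaxed minimizer back in $\mathcal{D}$. The only difference is presentational — you argue directly via equality of optimal values where the paper phrases the two inclusions as proofs by contradiction — which does not change the substance.
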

\begin{proof}
	see Appendix \ref{ap:poof_ER}.
\end{proof}

\section{Epigraphically-relaxed Linearly-involved generalized Moreau-Enhansed model}
Even though a solution can be found in non-proximable multilayered non-convex regularization by using ER, it may still converge to a local optimal solution depending on initial values. To address this issue, this section proposes an epigraphically-relaxed LiGME model (ER-LiGME model) guaranteeing global optimal solutions for non-proximable multilayered non-convex regularization. Firstly, Sec. \ref{sec:ER}-B formulate multilayered regularization and discuss the problem that ER cannot be applied to conventional LiGME models straightforwardly. In Sec. \ref{subsec:ERLiGME}-E, we propose a new LiGME model that seamlessly integrates ER and LiGME, and discuss its setting and computational complexity. Sec. \ref{sec:ERLiGME_examples} shows effective non-proximable multilayered non-convex regularizers arising from our ER-LiGME model.

\subsection{Problem Formulation}\label{sec:ER}
Let us consider the $K$-layered composite regularization problem in \eqref{eq:genform} under the setting of $\mathcal{F}_\y(\boldsymbol{\Phi}\mathbf{x}) = \tfrac{1}{2}\|\boldsymbol{\Phi}\mathbf{x}-\mathbf{y}\|_2^2$, where $\mathcal{R}^{(1,K)}$ is a possibly non-proximable regularizer with
a proximable sparsity-promoting outermost convex function $\mathcal{R}^{(K)}$, and $\iota_{\mathrm{epi}_{\mathcal{R}^{(k)}}}$ ($2\leq \forall k\leq K-1$) are proximable. By using the expanded variable $\widehat{\mathbf{x}} := \begin{bmatrix}
	\mathbf{x}^{\top} & \mathbf{z}^{(K)\top} & \cdots & \mathbf{z}^{(2)\top}
\end{bmatrix}^{\top}\in \mathbb{R}^{N + \sum_{k=K}^2 N_k}$, the expanded degradation process $\boldsymbol{\Phi}_{\mathbf{O}} := \begin{bmatrix}
	\boldsymbol{\Phi}  & \mathbf{O} & \cdots &\mathbf{O}
\end{bmatrix} \in \mathbb{R}^{M\times (N + \sum_{k=K}^2 N_k)}$, the outermost regularizer $\widehat{\mathcal{R}}^{(K)}(\widehat{\mathbf{x}}) := {\mathcal{R}}^{(K)}(\mathbf{z}^{(K)})$, and indicator functions for epigraph constraints, its ER problem in \eqref{eq:reform4} can be rewritten as follows:
\begin{align}\label{eq:P1ER_reform}
	& \hspace{-0.75em}\argmin_{\widehat{\mathbf{x}}} 
	\frac{1}{2}\|\boldsymbol{\Phi}_{\mathbf{O}}\widehat{\mathbf{x}} -\mathbf{y}\|_2^2 
	+ \mu \widehat{\mathcal{R}}^{(K)}(\widehat{\mathbf{x}}) + \iota_{\mathrm{epi}_{\{\mathcal{R}^{(k)}\}_{k=K-1}^{1}}\cap \mathcal{C} }(\mathbf{C}\widehat{\mathbf{x}}), \nonumber \\
	& \iota_{\mathrm{epi}_{\{\mathcal{R}^{(k)}\}_{k=K-1}^{1}}\cap \mathcal{C} }(\mathbf{C}\widehat{\mathbf{x}})= \sum_{k=K-1}^{1} \iota_{\mathrm{epi}_{\mathcal{R}^{(k)}}}(\widehat{\mathbf{z}}_k) + \iota_{\mathcal{C} }(\mathbf{x}),
\end{align}
where the matrix $\mathbf{C}$ maps $\widehat{\mathbf{x}}$ to $\mathbf{C}\widehat{\mathbf{x}} =\begin{bmatrix}
	\widehat{\mathbf{z}}_{K-1}^\top & \cdots & \widehat{\mathbf{z}}_{1}^\top & \mathbf{x}^\top
\end{bmatrix}^\top$ with
$
\widehat{\mathbf{z}}_1= \begin{bmatrix} (\mathbf{A}\mathbf{x})^\top & \mathbf{z}^{(2)\top}\end{bmatrix}^\top$ and $
\widehat{\mathbf{z}}_k=\begin{bmatrix} \mathbf{z}^{(k)\top} & \mathbf{z}^{(k+1)\top}\end{bmatrix}^\top\ (k\geq 2).
$ In this paper, we apply the cLiGME model (reviewed in Sec. \ref{sec:LiGME}) to ${\mathcal{R}}^{(K)}$ for achieving sparsity-enhanced non-convex multilayered regularization with preserving overall convexity.

\subsection{Problem on Multilayered Regularization with cLiGME Model}
\label{sec:problemERLiGME}
Applying the cLiGME model to \eqref{eq:genform} as ${\Psi}=\mathcal{R}^{(1,K)}$ is usually difficult because multilayered regularizers tend to be non-proximable. ER has the potential to enhance $\mathcal{R}^{(1,K)}$ in the cLiGME model by relaxing the multilayered regularizer into the sum of proximable regularizer and indicator functions as in \eqref{eq:P1ER_reform}. However, straightforward integration of the cLiGME model and the ER technique disables non-convexification of the multilayered regularizer, and thus, yields a biased solution as explained in the following.

For simplicity, let us consider the 2-layered regularization problem employing the TV penalty defined as $\|\x\|_{\mathrm{TV}}=\|\mathbf{D}_{\mathrm{vh}}\mathbf{x}\|_{2,1}$ in \eqref{eq:genform}, where $\x\in\R^{N^2}$ is a vectorized $N\times N$ signal and $\mathbf{D}_{\mathrm{vh}}$ is the difference operator defined in Appendix \ref{subsec:EpiDSTV}\footnote{
	Since the $\ell_{2,1}$-norm is proximable, we can introduce the TV regularizer in the cLiGME model without ER as $\Psi_\B=\widehat{\Psi}_{\mathbf{B}_c}= (\|\cdot\|_{2,1})_\mathbf{B},\,\boldsymbol{\mathfrak{L}}=\boldsymbol{\mathfrak{L}}_c=\mathbf{D}_{\mathrm{vh}}$ by choosing a matrix $\mathbf{B}$ satisfying $\boldsymbol{\Phi}^\top\boldsymbol{\Phi} - \mu \mathbf{D}_{\mathrm{vh}}^\top\mathbf{B}^\top\mathbf{B}\mathbf{D}_{\mathrm{vh}} \succeq \mathbf{O}.$
}. When $\mathcal{R}^{(1,2)}(\x)=\|\x\|_{\mathrm{TV}}$, the ER problem in \eqref{eq:P1ER_reform} is formulated as follows:
\begin{align}\label{eq:problemTV}
	&\argmin_{\widehat{\mathbf{x}}\in\R^{2N^2}} \frac{1}{2}\| \boldsymbol{\Phi}_{\mathbf{O}}\widehat{\mathbf{x}} - \mathbf{y} \|_2^2 + \mu \widehat{\mathcal{R}}^{(2)}(\widehat{\mathbf{x}})  +  \iota_{\mathrm{epi}_{\|\cdot\|_2}}(\mathbf{C}\widehat{\mathbf{x}}),\nonumber\\
	&\boldsymbol{\Phi}_{\mathbf{O}} = \begin{bmatrix}
		\boldsymbol{\Phi} & \mathbf{O}_{[M,N^2]}
	\end{bmatrix},\ 
	\widehat{\mathbf{x}} = \begin{bmatrix}
		\mathbf{x}^\top & \mathbf{z}^\top
	\end{bmatrix}^\top,\ \widehat{\mathcal{R}}^{(2)}(\widehat{\mathbf{x}}) = \|\mathbf{z}\|_1,
\end{align}
where  ${\boldsymbol{\Phi}} \in \mathbb{R}^{M \times N^2}$, $\mathbf{y} \in \mathbb{R}^{M}$, and $\mathbf{C}\widehat{\mathbf{x}} = \begin{bmatrix} (\mathbf{D}_{\mathrm{vh}}\mathbf{x})^\top & \mathbf{z}^\top\end{bmatrix}^\top$. The straightforward integration of the ER problem in \eqref{eq:problemTV} into the cLiGME model with $\boldsymbol{\mathfrak{L}} = \mathbf{I}$ and $\mathbf{C} = \mathrm{diag}(
\mathbf{D}_{\mathrm{vh}} , \mathbf{I})$ is given as
\begin{align}\label{eq:problemTV2}
	&\argmin_{\widehat{\mathbf{x}}\in\R^{2N^2}} \frac{1}{2}\| \boldsymbol{\Phi}_{\mathbf{O}}\widehat{\mathbf{x}} - \mathbf{y} \|_2^2 + \mu \widehat{\mathcal{R}}^{(2)}_\mathbf{B}(\widehat{\mathbf{x}})  +  \iota_{\mathrm{epi}_{\|\cdot\|_2}}(\mathbf{C}\widehat{\mathbf{x}}).
\end{align}
Since $\boldsymbol{\Phi}_{\mathbf{O}}^\top\boldsymbol{\Phi}_{\mathbf{O}} = \mathrm{diag}({\boldsymbol{\Phi}}^\top{\boldsymbol{\Phi}}, \mathbf{O}_{[N^2]}) $, the matrix $\mathbf{B}$ should form, for example, $\mathbf{B}= \begin{bmatrix}
	\mathbf{B}_0 & \mathbf{O}_{[N_\B,N^2]}
\end{bmatrix}$, where  ${\boldsymbol{\Phi}}^\top{\boldsymbol{\Phi}} - \mu \mathbf{B}_0^\top\mathbf{B}_0 \succeq \mathbf{O}$, to satisfy the overall convexity condition \eqref{eq:convcond}. As the result, the LiGME penalty $\widehat{\mathcal{R}}^{(2)}_\mathbf{B}(\widehat{\mathbf{x}})$ is reformulated as
\vspace{-0.2cm}
\begin{align}\label{eq:problemTV3}
	\widehat{\mathcal{R}}^{(2)}_\mathbf{B}(\widehat{\mathbf{x}}) :=& \widehat{\mathcal{R}}^{(2)}(\widehat{\mathbf{x}}) - \hspace{-0.25em}\min_{\widehat{\mathbf{v}} = [ \mathbf{v}_1^\top \  \mathbf{v}_2^\top ]^\top } \hspace{-0.25em}\left\{ \widehat{\mathcal{R}}^{(2)}(\widehat{\mathbf{v}}) \hspace{-0.1em}+\hspace{-0.1em} \frac{1}{2}\|\mathbf{B}(\widehat{\mathbf{x}} - \widehat{\mathbf{v}})\|_2^2\right\} \nonumber\\
	=& \|\mathbf{z}\|_1 - \min_{\widehat{\mathbf{v}} = [ \mathbf{v}_1^\top \  \mathbf{v}_2^\top ]^\top } \left\{ \|\mathbf{v}_2\|_1 + \frac{1}{2}\|\mathbf{B}_0({\mathbf{x}} - {\mathbf{v}_1})\|_2^2\right\} \nonumber\\
	=&\|\mathbf{z}\|_1 \ \ (\because \mathbf{v}_1 = \mathbf{x},\ \mathbf{v}_2 = \mathbf{0}).
\end{align}
This equation \eqref{eq:problemTV3} shows that $\widehat{\mathcal{R}}^{(2)}_\mathbf{B}(\widehat{\mathbf{x}}) = \widehat{\mathcal{R}}^{(2)}(\widehat{\mathbf{x}})$ and thus the non-convexification of $\widehat{\mathcal{R}}^{(2)}$ is disabled. This result holds for any $K$-layered ($K\geq2$) regularization problem relaxed by ER with $\boldsymbol{\Phi}_{\mathbf{O}}=\begin{bmatrix}
	\boldsymbol{\Phi} & \mathbf{O} & ... & \mathbf{O}
\end{bmatrix}$.

\subsection{Epigraphically-Relaxed LiGME Model for Multilayered Regularization}\label{subsec:ERLiGME}
To solve the above problem, we propose the ER-LiGME model. It integrates the cLiGME model and the ER technique by a \textit{guided observation extension} (GOE), which appends a reference data $\mathbf{z}_{\mathrm{R}}\in\R^{N_K}$ for the variable ${\mathbf{z}}^{(K)}$ to the observation as $\widehat{\mathbf{y}} = \begin{bmatrix}
	\mathbf{y}^\top & \sqrt{\rho}\ \mathbf{z}_{\mathrm{R}}^{\top}
\end{bmatrix}^\top\,(\rho\in\R_{++})
$, and extends the degradation process $\boldsymbol{\Phi}_{\mathbf{O}}$ as follows:
\begin{align}\label{eq:prob1cELG}
	\argmin_{\widehat{\mathbf{x}}} &\frac{1}{2}\| \widehat{\boldsymbol{\Phi}}\widehat{\mathbf{x}} - \widehat{\mathbf{y}} \|_2^2 + \mu 
	\widehat{\mathcal{R}}_{\mathbf{B}}^{(K)}(\widehat{\mathbf{x}}) + \iota_{\mathrm{epi}_{\{\mathcal{R}^{(k)}\}_{k=K-1}^{1}}\cap \mathcal{C} }(\mathbf{C}\widehat{\mathbf{x}}), \nonumber\\
	&\widehat{\boldsymbol{\Phi}} =\ 
	\begin{bmatrix}
		\boldsymbol{\Phi} & \mathbf{O} & \mathbf{O} & \cdots & \mathbf{O}  \\ 
		\mathbf{O} & \sqrt{\rho}\mathbf{I}_{[N_K]} & \mathbf{O} & \cdots  & \mathbf{O} 
	\end{bmatrix}.
\end{align}
\begin{Rem}\label{rem:dataFidelty}
$ \frac{1}{2} \| \widehat{\boldsymbol{\Phi}}\widehat{\mathbf{x}} - \widehat{\mathbf{y}} \|_2^2$ can be decomposed as  $\frac{1}{2}\| {\boldsymbol{\Phi}}{\mathbf{x}} - {\mathbf{y}} \|_2^2 +  \frac{{\rho}}{2}\| \mathbf{z}^{(K)} - \mathbf{z}_{\mathrm{R}}\|_2^2
$, which implies that the $\mathbf{z}^{(K)}$ is optimized to be close to the reference data $\mathbf{z}_{\mathrm{R}}$.
\end{Rem}
This model satisfies the overall convexity condition \eqref{eq:convcond} by setting the matrix $\mathbf{B}$ with a non-convexification parameter $\theta\in\R_{++}$ as
\begin{align}\label{eq:settingB}
	\mathbf{B} := \mathrm{diag}(\mathbf{O}_{[N]}, \sqrt{\theta/\mu}\mathbf{I}_{[N_K]}, \mathbf{O}_{[N_{K-1}]}, \cdots, \mathbf{O}_{[N_1]}).
\end{align}
Recall that $\boldsymbol{\mathfrak{L}}=\mathbf{I}$ when ER is applied. Then, by simply setting the parameters to satisfy $0\leq \theta \leq \rho < \infty$, the overall convexity condition of this model is easily satisfied as
\begin{align}\label{eq:condGOE}
	\widehat{\boldsymbol{\Phi}}^\top\widehat{\boldsymbol{\Phi}} - \mu \mathbf{B}^\top\mathbf{B} &= \mathrm{diag}(\boldsymbol{\Phi}^\top\boldsymbol{\Phi}, (\rho-\theta)\mathbf{I}, \mathbf{O} , ...,\mathbf{O}) \succeq \mathbf{O}.
\end{align}
The condition for \eqref{eq:condGOE} is significantly simpler than \eqref{eq:convcond} as it does not involve the matrices $\boldsymbol{\Phi}$ and $\mathbf{A}$ (denoted as $\boldsymbol{\mathfrak{L}}$ in \eqref{eq:convcond}), but only depends on the parameters $\rho$ and $\theta$. 
Furthermore, there is the prospect that the simplified condition can also contribute to single-layered regularization (i.e., $\mathcal{R}^{(1,K)}(\mathbf{Ax}) = \mathcal{R}^{(1)}(\mathbf{Ax})$) as formulated in Appendix \ref{ap:singleReg}.

The ER-LiGME model achieves the non-convexification of the outermost function of $\mathcal{R}^{(1,K)}$, i.e., $\mathcal{R}^{(K)}$. As in Example 1, the Moreau envelope of $\mathcal{R}^{(K)}$ is preserved through the extension of $\mathbf{\Phi}$ with GOE. Consequently, our model successfully enhances the sparsity/low-rankness evaluated in $\mathcal{R}^{(K)}$.
\begin{Ex}
	In the ER-LiGME model \eqref{eq:prob1cELG}, the 2-layered regularization problem in \eqref{eq:problemTV} is reformulated as
	\begin{align}\label{eq:prob2cELG}
		\argmin_{\widehat{\mathbf{x}}} \frac{1}{2}&\left\|\begin{bmatrix}
			\boldsymbol{\Phi} & \mathbf{O} \\
			\mathbf{O} & \sqrt{\rho}\mathbf{I}
		\end{bmatrix}\widehat{\mathbf{x}} - \begin{bmatrix}
			{\mathbf{y}}  \\
			\sqrt{\rho}\ \mathbf{z}_{\mathrm{R}}
		\end{bmatrix} \right\|_2^2 \nonumber\\ &\hspace{5em}+ \mu \widehat{\mathcal{R}}_{\mathbf{B}}^{(2)}(\widehat{\mathbf{x}}) + \iota_{\mathrm{epi}_{\|\cdot\|_2}}(\mathbf{C}\widehat{\mathbf{x}}),
	\end{align}
	where $\B = \mathrm{diag}(\mathbf{O},\,\sqrt{\theta/\mu}\mathbf{I})$ defined as in \eqref{eq:settingB}. Then, the enhanced regularizer $\widehat{\mathcal{R}}_{\mathbf{B}}^{(2)}$ is derived as follows:
	\begin{align}\label{eq:PreservingME_inTV}
		\widehat{\mathcal{R}}_{\mathbf{B}}^{(2)}(\widehat{\mathbf{x} })& := 
		\|\mathbf{z}\|_1 - \min_{\widehat{\mathbf{v}}_2 } \left\{ \|\mathbf{v}_2\|_1 + \frac{\theta}{2\mu}\|{\mathbf{z}} - {\mathbf{v}_2}\|_2^2\right\}\nonumber\\ &= (\|\cdot\|_1)_{\sqrt{\theta/\mu}\mathbf{I}}(\mathbf{z}).
	\end{align}
	Unlike \eqref{eq:problemTV3}, the Moreau envelope is preserved in \eqref{eq:PreservingME_inTV}.
\end{Ex}

A global optimal solution of Problem \eqref{eq:prob1cELG} can be obtained using Algorithm \ref{alg:cLiGME}, as long as the closed-form proximity operators for ${\mathcal{R}}^{(K)}$, $\iota_{\mathrm{epi}_{\mathcal{R}^{(k)}}} (1\leq\forall k\leq K-1)$, and $\iota_{\mathcal{C}}$ are available. 

In the rest of this section, we discuss what multilayered non-convex regularization problem corresponds to the epigraphically-relaxed problem in \eqref{eq:prob1cELG} as the original one and analyze conditions ensuring the equivalence of their global optimal solution sets for deeper comprehension. First, we consider Problem \eqref{eq:genform} and its ER
problem in \eqref{eq:reform4} with an additional data term, and its equivalence
as in the following proposition.
\begin{prop}[Equivalence between \eqref{eq:genform} and \eqref{eq:reform4} with an additional data term]\label{Theorem2}
	Let $\mathcal{R}^{(1,K)}$ be a non-convex multilayered regularizer with strictly-increasing (blockwise-vector) function $\mathcal{R}^{(k)}$ ($2\leq k \leq K$), and consider the following problems: 
	\begin{align}\label{eq:T2_LiGME}
		&\mathcal{S}_\mathbf{x} :=\ 
		\argmin_{\mathbf{x}\in \mathcal{C} \subset \mathbb{R}^{N}} \mathcal{U}(\x) + \frac{\rho}{2} \| \mathcal{R}^{(1,K-1)}(\mathbf{A}\mathbf{x}) - {\mathbf{z}_{\mathrm{R}}}\|_2^2,\\
		&\label{eq:T2_LiGME2}\widetilde{\mathcal{S}}_{\mathbf{x}}\hspace{-0.25em}\times\hspace{-0.25em}\prod_{k=K}^{2}\widetilde{\mathcal{S}}_{\mathbf{z}^{(k)}}  := \argmin_{( \mathbf{x}, \{\mathbf{z}^{(k)}\}_{k=K}^{2}) \in \widetilde{\mathcal{D}} } \widetilde{\mathcal{V}}\left( \mathbf{x}, \{\mathbf{z}^{(k)}\}_{k=K}^{2}\right)\nonumber\\ 
		& \hspace{14em} + \frac{\rho}{2} \| \z^{(K)} - {\mathbf{z}_{\mathrm{R}}}\|_2^2,
	\end{align}
	where the functions $\mathcal{U}$ and $\widetilde{\mathcal{V}}$ are defined in \eqref{eq:genform} and \eqref{eq:reform4}, respectively.
	Then, for $\forall( \mathbf{x}^\star, \{\mathbf{z}^{\star(k)}\}_{k=K}^{2})\in\widetilde{\mathcal{S}}_{\mathbf{x}}\times\prod_{k=K}^{2}\widetilde{\mathcal{S}}_{\mathbf{z}^{(k)}}$,
	\begin{enumerate}
		\item if $\mathbf{z}_{\mathrm{R}} = \mathbf{0}$, then $( \mathbf{x}^\star, \{\mathbf{z}^{\star(k)}\}_{k=K}^{2})\in\mathcal{D}$ and $\mathcal{S}_{\mathbf{x}} = \widetilde{\mathcal{S}}_{\mathbf{x}}$,
		\item if \(\mathbf{z}_{\mathrm{R}} \neq \mathbf{0}\) and \([\mathbf{z}_{\mathrm{R}}]_i < [{\mathbf{z}}^{\star(K)}]_i\) holds for any \(i\) where \([\mathbf{z}_{\mathrm{R}}]_i \neq 0\), then $( \mathbf{x}^\star, \{\mathbf{z}^{\star(k)}\}_{k=K}^{2})\in\mathcal{D}$ and $\mathcal{S}_{\mathbf{x}} = \widetilde{\mathcal{S}}_{\mathbf{x}}$.
	\end{enumerate}
	where $\mathcal{D}$ is the set defined in \eqref{eq:reform2}.
\end{prop}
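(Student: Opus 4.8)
The plan is to collapse the whole statement onto a single \emph{tightness} fact---that every relaxed minimizer already satisfies the epigraph constraints with equality, i.e.\ lies in $\mathcal{D}$---and then to deduce $\mathcal{S}_{\x}=\widetilde{\mathcal{S}}_{\x}$ by the relaxation bookkeeping already used for Proposition~\ref{Theorem1}. On $\mathcal{D}$ the nested equalities give $\z^{(K)}=\mathcal{R}^{(1,K-1)}(\A\x)$ and hence $\mathcal{R}^{(K)}(\z^{(K)})=\mathcal{R}^{(1,K)}(\A\x)$, so the objective of \eqref{eq:T2_LiGME2} restricted to $\mathcal{D}$ coincides with that of \eqref{eq:T2_LiGME}. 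Since $\mathcal{D}\subset\widetilde{\mathcal{D}}$, the relaxed infimum is at most the original one; if in addition every relaxed minimizer lies in $\mathcal{D}$, the reverse inequality and the equality of the $\x$-components follow verbatim as in Proposition~\ref{Theorem1}. Thus the entire content reduces to the tightness claim, and the two cases are simply two ways of securing it.

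The monotonicity input I would use is this: for any relaxed-feasible $(\x,\{\z^{(k)}\})$, the ``projected'' point obtained by turning each inequality into an equality, $\bar\z^{(2)}=\mathcal{R}^{(1)}(\A\x)$ and $\bar\z^{(k+1)}=\mathcal{R}^{(k)}(\bar\z^{(k)})$, satisfies $\bar\z^{(k)}\le\z^{(k)}$ for all $k$; this propagates inward from the outermost feasibility $\mathcal{R}^{(1)}(\A\x)\le\z^{(2)}$ using that strict monotonicity of each $\mathcal{R}^{(k)}$ implies the weak (nondecreasing) property. In particular $\bar\z^{(K)}\le\z^{(K)}$. Writing the augmented outermost function $\widehat{\mathcal{R}}^{(K)}(\z):=\mu\mathcal{R}^{(K)}(\z)+\tfrac{\rho}{2}\|\z-\z_{\mathrm{R}}\|_2^2$, the $\mathcal{F}_\y(\boldsymbol{\Phi}\x)$ terms of a point and its projection cancel, so the objective gap is exactly $\widehat{\mathcal{R}}^{(K)}(\z^{(K)})-\widehat{\mathcal{R}}^{(K)}(\bar\z^{(K)})$, and tightness will follow once $\widehat{\mathcal{R}}^{(K)}$ is strictly increasing from $\bar\z^{(K)}$ up to $\z^{(K)}$.

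For case~1 ($\z_{\mathrm{R}}=\mathbf 0$) the data term $\tfrac{\rho}{2}\|\z\|_2^2$ is nondecreasing on the feasible cone $\z\ge\mathbf 0$, so $\widehat{\mathcal{R}}^{(K)}$ is strictly increasing on all of $\R_{++}^{N_K}$; this is precisely the situation of Lemma~\ref{Lemma1} with $\widehat{\mathcal{R}}^{(K)}$ as the outermost strictly increasing regularizer. Comparing a minimizer with its projection then forces $\widehat{\mathcal{R}}^{(K)}(\z^{\star(K)})=\widehat{\mathcal{R}}^{(K)}(\bar\z^{(K)})$; strict monotonicity of $\widehat{\mathcal{R}}^{(K)}$ gives $\z^{\star(K)}=\bar\z^{(K)}$, and strict monotonicity of each inner $\mathcal{R}^{(k)}$ ($2\le k\le K-1$) then propagates $\z^{\star(k)}=\bar\z^{(k)}$ downward through the chain, yielding $(\x^\star,\{\z^{\star(k)}\})\in\mathcal{D}$ and hence $\mathcal{S}_{\x}=\widetilde{\mathcal{S}}_{\x}$.

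Case~2 is the delicate one and is where the extra hypothesis enters: the quadratic $\|\z-\z_{\mathrm{R}}\|_2^2$ is increasing only on $\{\z:\z\ge\z_{\mathrm{R}}\}$, so the global comparison can fail if the projection $\bar\z^{(K)}$ undershoots $\z_{\mathrm{R}}$. The assumption $[\z_{\mathrm{R}}]_i<[\z^{\star(K)}]_i$ for every $i$ with $[\z_{\mathrm{R}}]_i\ne 0$ is exactly what places the minimizer strictly inside the region where $\widehat{\mathcal{R}}^{(K)}$ is strictly increasing. I would therefore argue by local perturbation and outermost-in peeling: first decrease a slack component of $\z^{(K)}$ directly---this preserves feasibility, strictly lowers $\mu\mathcal{R}^{(K)}$, and, because $\z^{\star(K)}$ lies strictly above $\z_{\mathrm{R}}$, also lowers the data term for small perturbations, so the outermost constraint must be tight; then, with the outer constraints tight, decrease a slack inner component $[\z^{(m+1)}]_j$ and re-propagate it up the now-tight chain, where continuity of the inner $\mathcal{R}^{(k)}$ keeps the induced $\z^{(K)}$ close to $\z^{\star(K)}$ and hence still above $\z_{\mathrm{R}}$, giving the same strict decrease and a contradiction. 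I expect this propagation control to be the main obstacle: ensuring that a small inner perturbation keeps $\z^{(K)}$ in the strictly increasing region relies on regularity (continuity) of the inner layers, which is automatic for the norm-based regularizers of interest but is the step most in need of care when only the strict-monotonicity assumption~(A2) is assumed.
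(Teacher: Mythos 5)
Your overall strategy coincides with the paper's: reduce everything to the tightness claim that relaxed minimizers lie in $\mathcal{D}$, then recycle the bookkeeping of Proposition~\ref{Theorem1}. Your Case~1 is sound and matches the paper, which likewise absorbs the data term into an augmented outermost function $\widetilde{\mathcal{R}}^{(K)}(\z):=\tfrac{\rho}{2}\|\z\|_2^2+\mu\mathcal{R}^{(K)}(\z)$, observes that it is strictly increasing on the nonnegative feasible region, and then reruns the Lemma~\ref{Lemma1} argument verbatim. Your phrasing via the full ``projected'' point $\bar\z$ rather than one-coordinate perturbations is a cosmetic difference only.

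The issue is your Case~2 treatment of the inner layers. Decreasing a slack component of $\z^{(m+1)}$ and then \emph{re-propagating equalities up the chain} forces you to control where the induced $\z^{(K)}$ lands, and your only handle on that is continuity of the inner $\mathcal{R}^{(k)}$ --- which is not implied by (A2) and is not assumed in the proposition. You correctly flag this as the weak point, but it is avoidable, and the paper's proof avoids it: instead of propagating exactly, one perturbs each subsequent layer's \emph{variable} by an independently chosen amount. Concretely, after decreasing $[\z^{(K-1)}]_{i_{K-1}}$ while staying above the constraint from below, strict monotonicity of $\mathcal{R}^{(K-1)}$ yields an index $i_K$ with $[\mathcal{R}^{(K-1)}(\widecheck{\z}^{(K-1)})]_{i_K}<[\z^{\star(K)}]_{i_K}$; one then replaces $[\z^{\star(K)}]_{i_K}$ by any value in the open interval
\begin{align*}
\left(\max\left\{[\mathcal{R}^{(K-1)}(\widecheck{\z}^{(K-1)})]_{i_K},\,[\mathbf{z}_{\mathrm{R}}]_{i_K}\right\},\ [\z^{\star(K)}]_{i_K}\right),
\end{align*}
which is nonempty precisely because of the hypothesis $[\mathbf{z}_{\mathrm{R}}]_{i}<[\z^{\star(K)}]_{i}$ (or $[\mathbf{z}_{\mathrm{R}}]_{i}=0$). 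This keeps the point feasible (only inequalities, not equalities, are required at intermediate layers), strictly decreases $\mu\mathcal{R}^{(K)}$, and does not increase $\tfrac{\rho}{2}\|\cdot-\mathbf{z}_{\mathrm{R}}\|_2^2$, giving the contradiction with no continuity assumption. You should replace your re-propagation step with this windowed perturbation to close the gap.
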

\begin{proof}
	see Appendix \ref{ap:poof_ERLiGME}.
\end{proof}
The optimization problem in \eqref{eq:prob1cELG} and its original (non-ER)
one correspond to the ones in \eqref{eq:T2_LiGME2} and \eqref{eq:T2_LiGME} with $\mathcal{F}_\y(\boldsymbol{\Phi}\mathbf{x}) = \tfrac{1}{2}\|\boldsymbol{\Phi}\mathbf{x}-\mathbf{y}\|_2^2$ and $\mathcal{R}_\B^{(K)}$ as the $K$-th layer function, respectively. Since the LiGME penalty is non-decreasing in general (e.g., the scaled MC penalty in \cite{SelesnickGMC2017}), the equivalence of their global optimal solution sets does not hold even for the above cases in Proposition \ref{Theorem2}. However, we can guarantee the equivalence by using a slightly modified LiGME penalty with the strictly increasing property, instead of using the LiGME penalty itself. For example, adding a strictly increasing function to the LiGME penalty at the $K$-th layer in \eqref{eq:prob1cELG}, i.e., $\mathcal{R}^{(K)}_{\mathbf{B},\varepsilon}:=\widehat{\mathcal{R}}^{(K)}_{\mathbf{B}}+\varepsilon\|\cdot\|_1\,(\varepsilon\in\R_{++},\,\varepsilon\sim 0)$, makes it strictly increasing.

\subsection{Setting of the Reference Data for the ER-LiGME Model}\label{sec:referenceData}
Concerning Remark \ref{rem:dataFidelty}, setting a reasonable reference data is important to estimate $\z^{(K)}$ accurately. One way is to set the reference data as  $\mathbf{z}_{\mathrm{R}}=\mathcal{R}^{(1,K-1)}(\mathbf{A}\mathbf{x}^\star)$, where $\x^\star$ is a solution for the non-LiGME problem in \eqref{eq:P1ER_reform}. However, it should be noted that $\mathcal{R}^{(1,K-1)}(\mathbf{A}\mathbf{x}^\star)$ suffers from underestimation when $\mathcal{R}^{(K)}$ is a convex sparsity/low-rankness promoting regularizer, such as the $\ell_1$-/nuclear-norm. Therefore, we set $\mathbf{z}_{\mathrm{R}} = \alpha \mathcal{R}^{(1,K-1)}(\mathbf{A}\mathbf{x}^\star)$, where $\alpha = 1/\rho$ $(0<\rho<1)$ or $\alpha = 1 + 1/\rho$ $ (1 \leq \rho)$.

\subsection{Computational Complexity of the ER-LiGME Model}\label{sec:complexity}
Compared with the cLiGME model, the main computational bottleneck of the ER-LiGME model with Algorithm \ref{alg:cLiGME} is the proximity operators generated by ER.
Nevertheless, as long as each proximal operator is closed-form, we can efficiently compute them for each step. The computational complexities of commonly used regularizers are presented in Table \ref{tab:complexity}. For the regularizers listed in the table, the computational complexity is within the range of $\mathcal{O}(N)$ to $\mathcal{O}(MN \min\{M,N\})$.

Solving \eqref{eq:P1ER_reform} for the reference data $\mathbf{z}_{\mathrm{R}}$ also increases the computational cost of the ER-LiGME model. One promising approach is to use a solution from a simplified optimization problem, e.g., $
\mathbf{x}^\star= \argmin_{\mathbf{x} \in \mathbb{R}^{N^2}} \frac{1}{2}\| \boldsymbol{\Phi}\mathbf{x} - \mathbf{y} \|_2^2 + \frac{\mu}{2} \|\mathbf{D}_{\mathrm{vh}}\mathbf{x}\|_{2}^2= (\boldsymbol{\Phi}^\top\boldsymbol{\Phi}+\mu\mathbf{D}_{\mathrm{vh}}^\top\mathbf{D}_{\mathrm{vh}})^{-1}\boldsymbol{\Phi}^\top\mathbf{y}
$. 
Although we adopt the solution of \eqref{eq:P1ER_reform} for the experiments in Sec. \ref{sec:experim}, we establish the simplified pre-computation for $\mathbf{z}_{\mathrm{R}}$ as our future work.

\subsection{Practical Applications of the ER-LiGME Model for Non-Proximable Multilayered Regularizers}\label{sec:ERLiGME_examples}
\subsubsection{ER-LiGME Decorrelated STV}\label{subsec:ERLiGME_DSTV}
The decorrelated STV (DSTV) is a 3-layered mixed norm modeled local gradient similarity by the nuclear norm of a local gradient matrix in luma and chroma components \cite{Kyochi2021}. It shows better performance in image recovery compared with other TV and STV based methods\cite{Bredies2010,Ono2014DVTV,Lefkimmiatis2015,Ono2016}. Its definition is as follows (see Appendix \ref{ap:examples_MR} for detailed information):
\begin{align}\label{eq:defDSTV-IV}
	\|\mathbf{x}\|_{\mathrm{DSTV}}=&\| \cdot \|_{2, 1}^{(1,2)} \circ \|\cdot\|_\ast^{(W^2)}(\mathbf{PED}\mathbf{C}_{\mathrm{D}}\mathbf{x}).
\end{align}
In order to enhance the DSTV by the ER-LiGME model, we set $\widehat{\mathbf{x}} = \begin{bmatrix}
	\mathbf{x}^\top & \mathbf{z}^{(2)} 
\end{bmatrix}^\top$, $\widehat{\boldsymbol{\Phi}}=\mathrm{diag}(\boldsymbol{\Phi},\,\sqrt{\rho}\mathbf{I})$, $\widehat{\mathbf{y}} = \begin{bmatrix}
\mathbf{y}^\top & \sqrt{\rho}\ \mathbf{z}_{\mathrm{R}}^{\top}
\end{bmatrix}^\top$, $ \widehat{\mathcal{R}}_{\mathbf{B}}^{(2)}(\widehat{\mathbf{x}}) = (\| \cdot \|_{2, 1}^{(1,2)})_\mathbf{B}(\mathbf{x})$, and $\iota_{\mathrm{epi}_{\mathcal{R}^{(1)}}} = \iota_{\mathrm{epi}_{\|\cdot\|_\ast^{(W^2)}}}$ in \eqref{eq:prob1cELG} as
\begin{align}\label{eq:prob4ERLiGME}
	&\argmin_{\widehat{\mathbf{x}}\in\R^{3N^2}} \frac{1}{2}\|\,\widehat{\boldsymbol{\Phi}}\widehat{\mathbf{x}} - \widehat{\mathbf{y}}\|_2^2 + \mu \widehat{\mathcal{R}}_{\mathbf{B}}^{(2)}(\widehat{\mathbf{x}}) + \iota_{\mathrm{epi}_{\mathcal{R}^{(1)}}\cap\,\mathcal{C}}(\mathbf{C}\widehat{\mathbf{x}}).
\end{align}
Even though the closed-form proximity operator of \eqref{eq:defDSTV-IV} is not obvious, the enhanced DSTV minimization can be tractable as the proximity operator for $\| \cdot \|_{2,1}^{(1,2)}$ and the projection onto epigraph of $\|\cdot\|_\ast^{(W^2)}$ can be computed in closed-form.

\subsubsection{ER-LiGME Amplitude Spectrum Nuclear Norm}
The robust PCA (RPCA) \cite{Candes2011} aims to extract principal components represented as low-rank components $\mathbf{L}\in\R^{M\times N}$ from observations $\mathbf{Y} = \mathbf{L}+\mathbf{S}\in\R^{M\times N}$ in the presence of outliers $\mathbf{S}\in\R^{M\times N}$, which can be
modeled by the following least-squares problem:
\begin{align}\label{eq:prob3}
	&\argmin_{\mathbf{x} \in \mathbb{R}^{2MN},\boldsymbol{\ell}\in\mathcal{C}} \frac{1}{2}\left\|\begin{bmatrix}
		\mathbf{I} & \mathbf{I}
	\end{bmatrix}\mathbf{x} - \mathbf{y} \right\|_2^2 + \mu f(\mathbf{x}), \nonumber\\
	&\mathbf{x} = \begin{bmatrix}
		\boldsymbol{\ell}^\top & \mathbf{s}^\top
	\end{bmatrix}^\top,\  f(\mathbf{x}) = \lambda_1\| \mathrm{vec}^{-1}(\boldsymbol{\ell}) \|_* + \lambda_2\varphi(\mathbf{s}),
\end{align}
where $\boldsymbol{\ell} = \mathrm{vec}(\mathbf{L}) $, $\mathbf{y} = \mathrm{vec}(\mathbf{Y} )$, and  $\mathbf{s} = \mathrm{vec}(\mathbf{S}) \in \R^{MN}$ are the vectorized data of the principal components, the observation, and the outliers, $\mathcal{C}\subset\R^{MN}$, $\lambda_1,\lambda_2\in\R_{++}$, and $\varphi(\cdot)$ is a sparsity-promoting regularizer. Here, we set $\lambda_2 = 1$ and $\varphi(\mathbf{s}) := \iota_{\|\cdot\|_1\leq \varepsilon_\mathbf{s}}(\mathbf{s})$ ($\varepsilon_\mathbf{s} \in \R_{++}$), and formulate the cLiGME problem corresponding to \eqref{eq:prob3} with $\boldsymbol{\Phi}=\begin{bmatrix}
	\mathbf{I} & \mathbf{I}
\end{bmatrix}$ as follows:
\begin{align}\label{eq:prob3cL}
\argmin_{\mathbf{x}\in\R^{2MN}}&\frac{1}{2}\left\|\boldsymbol{\Phi}\mathbf{x} - \mathbf{y} \right\|_2^2 + \mu \mathcal{R}_{\mathbf{B}}(\mathbf{x}) + \iota_{\mathcal{C}_2}(\mathbf{x}), \nonumber\\
&\mathcal{R}(\mathbf{x}) = \lambda_1\| \mathrm{vec}^{-1}_{(M,N)}(\boldsymbol{\ell}) \|_*,\nonumber\\
&\iota_{\mathcal{C}_2}(\mathbf{x})=\iota_{\mathcal{C}}(\boldsymbol{\ell}) + \iota_{\|\cdot\|_1\leq\varepsilon_\mathbf{s}}(\mathbf{s}),
\end{align}
where $\mathbf{B} = \mathrm{diag}(\sqrt{\theta/\mu}\I,\,\mathbf{O})\ (0\leq\theta\leq 1)$ that satisfies \eqref{eq:convcond}.

One of the drawbacks of RPCA is that it fails to extract principal components accurately in the presence of misalignment,  due to the use of the nuclear norm in the signal domain. The frequency domain RPCA (FRPCA) resolve this problem by using the amplitude spectrum nuclear norm (ASNN) \cite{Kyochi2021} that considers the fundamental singal processing property: the amplitude spectra of the shifted signals are identical. The ASNN can be regarded as a non-proximable 2-layered mixed norm involving a matrix $\mathbf{L}$ associated with the DFT (see Appendix \ref{ap:examples_MR} for detailed information):
\begin{align}\label{eq:defASNN_IV}
	\| \boldsymbol{\ell} \|_{\mathrm{ASNN}} = \|\mathrm{vec}^{-1}_{(M,N)}(\cdot)\|_* \circ \|\cdot\|_{2}^{(M,N)}(\mathbf{T}\boldsymbol{\ell}).
\end{align}

Let us consider the FRPCA problem with ASNN and its epigraphically-relaxed problem with $	\widehat{\mathbf{x}} = \begin{bmatrix}
	\boldsymbol{\ell}^\top & \mathbf{s}^\top & \mathbf{z}^\top
\end{bmatrix}^\top$ and $\boldsymbol{\Phi}_{\mathbf{O}}=\begin{bmatrix}
\mathbf{I} & \mathbf{I} & \mathbf{O}
\end{bmatrix}$ formulated as follows:
\begin{align}\label{eq:prob4}
	&\argmin_{\mathbf{x}\in\R^{2MN}}\frac{1}{2}\left\|\boldsymbol{\Phi}\mathbf{x} - \mathbf{y} \right\|_2^2 + \mu \mathcal{R}^{(1,2)}(\mathbf{x}) + \iota_{\mathcal{C}_2}(\mathbf{x}),\nonumber\\ 
	&\argmin_{\widehat{\mathbf{x}}\in\R^{3MN}} \frac{1}{2}\left\|\boldsymbol{\Phi}_{\mathbf{O}}\widehat{\mathbf{x}} - \mathbf{y} \right\|_2^2 + \mu \widehat{\mathcal{R}}^{(2)}(\widehat{\mathbf{x}})+ \iota_{\mathrm{epi}_{\|\cdot\|_2^{(M,N)}}\cap\,\mathcal{C}_2}(\mathbf{C}\widehat{\mathbf{x}}), \nonumber\\ 
	&\mathcal{R}^{(1,2)}(\mathbf{x}) = \lambda_1\| \boldsymbol{\ell} \|_{\mathrm{ASNN}},\ \widehat{\mathcal{R}}^{(2)}(\widehat{\mathbf{x}}) = \lambda_1\|\mathrm{vec}^{-1}_{(M,N)}(\mathbf{z})\|_{*},
\end{align}
where the matrix $\mathbf{C}$ maps $\widehat{\mathbf{x}}$ to $\mathbf{C}\widehat{\mathbf{x}} =\begin{bmatrix}
	\widehat{\mathbf{z}}^\top & \mathbf{x}^\top
\end{bmatrix}^\top$ with $\widehat{\mathbf{z}}= \begin{bmatrix} (\mathbf{T}\boldsymbol{\ell})^\top & \mathbf{z}^{\top}\end{bmatrix}^\top$. The ASNN can be enhanced by the ER-LiGME model as follows:
\begin{align}\label{eq:prob4cELG}
	\argmin_{\widehat{\mathbf{x}}\in\R^{3MN}} &\frac{1}{2}\left\|\widehat{\boldsymbol{\Phi}}\widehat{\mathbf{x}} - \widehat{\mathbf{y}} \right\|_2^2 + \mu \widehat{\mathcal{R}}_{\mathbf{B}}^{(2)}(\widehat{\mathbf{x}}) + \iota_{\mathrm{epi}_{\|\cdot\|_2^{(M,N)}}\cap\,\mathcal{C}_2}(\mathbf{C}\widehat{\mathbf{x}}),\nonumber\\
	&\widehat{\boldsymbol{\Phi}} =\begin{bmatrix}
		\mathbf{I} & \mathbf{I} & \mathbf{O}\\
		\mathbf{O} & \mathbf{O} & \sqrt{\rho}\mathbf{I}
	\end{bmatrix},\quad \widehat{\mathbf{y}} =\begin{bmatrix}
	{\mathbf{y}}  \\
	\sqrt{\rho}\ \mathbf{z}_{\mathrm{R}}
	\end{bmatrix}
\end{align}
where $\B = \mathrm{diag}(\mathbf{O},\,\mathbf{O},\,\sqrt{\theta/\mu}\mathbf{I})$. As is the case with ER-LiGME DSTV in Sec. \ref{subsec:ERLiGME_DSTV}, even though the closed-form proximity operator of \eqref{eq:defASNN_IV} is not obvious, the enhanced ASNN minimization can be tractable by using the closed-form proximity operator for $\|\cdot\|_*$, and the projections onto epigraph of $\|\cdot\|_2^{(M,N)}$ and the $\ell_1$-ball $\|\cdot\|_1\leq\varepsilon_\mathbf{s}$.

\section{Experiment Results}
\label{sec:experim}
We evaluated the performance of the ER-LiGME model in signal recovery through the three experiments: image denoising, compressed image sensing reconstruction, and principal component analysis from misaligned synthetic data in the presense of outlier. The accuracy was evaluated by the PSNR [dB] of estimated signals\footnote{PSNR is defined by $10\mathrm{log}_{10} (N/\|\mathbf{u}-\mathbf{u}_{\mathrm{org}}\|^2_2)$, where $\mathbf{u}\in\R^N$ is a restored signal, and $\mathbf{u}_{\mathrm{org}}\in[0,1]^N$ is the original one. The test data are normalized.}.
The images used in the experiments, shown in Fig. \ref{fig:ex1_images_org} and Fig. \ref{fig:ex2_images_org}, were provided from \textit{Berkeley Segmentation Database} (BSDS300) \cite{Martin2001}, and incomplete observations of them were generated as in the model of \eqref{eq:observation}, where $\n$ was defined as an additive white Gaussian noise which has a standard derivation $\sigma_{\mathrm{noise}}$.
The experiments were performed using MATLAB (R2023b, 64bit) on Windows OS (Version 11) with an Intel Core i9 3.7 GHz 10-core processor and 32 GB DIMM memory.

\begin{table}[t]
	\caption{Operators in Algorithm 1.}
	\begingroup
	\renewcommand{\arraystretch}{1.5}
	\setlength{\tabcolsep}{3pt}
	\centering
		\begin{tabular}{l|ll|l}
		\thline
		Regularizer   & $\boldsymbol{\mathfrak{L}}$                    & $\C$                  & $\B$      \\ \thline
		DVTV          & \multirow{2}{*}{$\D\mathbf{C}_{\mathrm{D}}$} & \multirow{2}{*}{$\I$} & $\mathbf{O}$     \\  \cline{1-1} \cline{4-4}
		LiGME DVTV    &   &  & $\begin{bmatrix} \sqrt{{\theta}/{\mu}}\widehat{\mathbf{M}} & \mathbf{O} \end{bmatrix}\mathbf{\Sigma}^\dagger\mathfrak{U}^\top$      \\ \hline
		ER-LiGME DVTV & $\I$                    & \renewcommand{\arraystretch}{1}$\begin{bmatrix}
			\D\mathbf{C}_{\mathrm{D}} & \mathbf{O} \\ \mathbf{O} & \I \\ \I & \mathbf{O}
		\end{bmatrix}$                & $\mathrm{diag}(\mathbf{O},\sqrt{\frac{\theta}{\mu}}\I)$ \\ \hline
		STV           & \multirow{2}{*}{$\P\E\D$} & \multirow{2}{*}{$\I$} & $\mathbf{O}$      \\ \cline{1-1} \cline{4-4}
		LiGME STV     &  &  & $\begin{bmatrix} \sqrt{{\theta}/{\mu}}\widehat{\mathbf{M}} & \mathbf{O} \end{bmatrix}\mathbf{\Sigma}^\dagger\mathfrak{U}^\top$      \\ \hline
		ER-LiGME STV  & $\I$                    & \renewcommand{\arraystretch}{1}$\begin{bmatrix}
			\P\E\D & \mathbf{O} \\ \mathbf{O} & \I \\ \I & \mathbf{O}
		\end{bmatrix}$               & $\mathrm{diag}(\mathbf{O},\sqrt{\frac{\theta}{\mu}}\I)$ \\ \hline
		DSTV          & \multirow{2}{*}{$\I$} & \multirow{2}{*}{\renewcommand{\arraystretch}{1}$\begin{bmatrix}
			\P\E\D\mathbf{C}_{\mathrm{D}} & \mathbf{O} \\ \mathbf{O} & \I \\ \I & \mathbf{O}
		\end{bmatrix}$} & $\mathbf{O}$      \\  \cline{1-1} \cline{4-4}
		ER-LiGME DSTV &  &  & $\mathrm{diag}(\mathbf{O},\sqrt{\frac{\theta}{\mu}}\I)$ \\ \cline{1-4}
		NN          & \multirow{2}{*}{$\I$}  & \multirow{2}{*}{$\I$} & $\mathbf{O}$      \\ \cline{1-1} \cline{4-4}
		LiGME NN    &  & &
		$\mathrm{diag}(\sqrt{\frac{\theta}{\mu}}\I, \mathbf{O})$ \\ \cline{1-4}
		ASNN          & \multirow{2}{*}{$\I$}  & \multirow{2}{*}{\renewcommand{\arraystretch}{1}$\begin{bmatrix}
			\mathbf{T} & \mathbf{O} & \mathbf{O} \\ \mathbf{O} & \mathbf{O} & \I \\ \I & \mathbf{O} & \mathbf{O}\\ \mathbf{O} & \I & \mathbf{O}
		\end{bmatrix}$} & $\mathbf{O}$      \\ \cline{1-1} \cline{4-4}
		ER-LiGME ASNN &  &  & $\mathrm{diag}(\mathbf{O},\mathbf{O},\sqrt{\frac{\theta}{\mu}}\I)$\rule{0pt}{2em} \\
		\thline
	\end{tabular}
	\label{tab:ex1_prob}
	\endgroup
	\vspace{0.5em}
\end{table}

\subsection{Denoising} \label{sec:exDenoising}
This section shows the results of denoising on the five images displayed in Fig. \ref{fig:ex1_images_org}. The size of each image was $63\times 63$ and the observation was given with $\boldsymbol{\Phi} = \mathbf{I}$ and $\sigma_{\mathrm{noise}}=0.1$. We compared with eight regularization methods based on the three regularizers: the decorrelated vectorial TV (DVTV)\cite{Ono2014DVTV}, STV\cite{Lefkimmiatis2015}, and DSTV. For the proximable regularizers, DVTV and STV, we conducted experiments in three cases: 1) their original form, 2) using in cLiGME model (termed L-DVTV/STV), and 3) using in ER-LiGME model (termed EL-DVTV/STV). For the non-proximable regularizer, DSTV, we experimented with two cases: 1) the original form and 2) using in ER-LiGME model (termed EL-DSTV). All of the optimization problems were solved by Algorithm \ref{alg:cLiGME}, where $\mathcal{C}:=[0,1]^{3N^2}$($N=63$). The forms of $\boldsymbol{\mathfrak{L}},\mathbf{C}$, and $\mathbf{B}$ for each method are summarized in Table \ref{tab:ex1_prob}, where $\mathbf{D},\mathbf{C}_{\mathrm{D}}, \P$, and $\E$ are detailed in Appendix \ref{ap:examples_MR}. The design of $\B$ for L-DVTV/STV were given by \cite[Proposition 2]{Chen2023}, consisting of the left singular matrix $\boldsymbol{\mathfrak{U}}\in\R^{N_{\boldsymbol{\mathfrak{L}}}\times N_{\boldsymbol{\mathfrak{L}}}}$ and the singular value matrix $\boldsymbol{\Sigma}\in \mathbb{R}^{N_{\boldsymbol{\mathfrak{L}}}\times N}$ for $\boldsymbol{\mathfrak{L}}\in \mathbb{R}^{N_{\boldsymbol{\mathfrak{L}}}\times N}$ ($\boldsymbol{\mathfrak{U}\Sigma\mathfrak{B}}^\top=\boldsymbol{\mathfrak{L}}$), and the matrix $\widehat{\mathbf{M}}=\boldsymbol{\Phi}_1-\boldsymbol{\Phi}_2 \boldsymbol{\Phi}_2^\dagger \boldsymbol{\Phi}_1\in\R^{M\times r}$, where $r=\mathrm{rank}(\boldsymbol{\mathfrak{L}})$, and the matrices $\boldsymbol{\Phi}_1\in\R^{M\times r}$ and $\boldsymbol{\Phi}_2\in\R^{M\times (N-r)}$ are given by $\begin{bmatrix}\boldsymbol{\Phi}_1\ |\  \boldsymbol{\Phi}_2\end{bmatrix}=\boldsymbol{\Phi}\boldsymbol{\mathfrak{B}}$.

\subsubsection{Parameter Settings}
The regularization parameter was set to  $\mu=0.05,\,0.1,\,0.2,\,0.3,\,0.4,\,0.5$. The Moreau enhancement parameters for the cLiGME model and ER-LiGME model were $\theta = 0.99$ and $\theta = \rho = 3.50$, respectively. In Algorithm \ref{alg:cLiGME}, $\tau$ and $\sigma$ were set to satisfy the condition \cite[Theorem 1]{Abe2020} with $\kappa=1.1$. The stop criteria was set to $\varepsilon_{\mathrm{stop}} = 10^{-3}$. The methods that employ STV or DSTV were implemented with non-overlapping $3\times 3$ patches\footnote{The computation of $\B$ proposed in \cite{Chen2023} for conventional LiGME model requires an explicit definition of $\boldsymbol{\mathfrak{L}}$ and $\boldsymbol{\Phi}$. However, the matrices may surpass the available memory capacity. For instance, when employing STV/DSTV with $W=3,N=63,$ and fully overlapping patches, $\boldsymbol{\mathfrak{L}}\in\R^{6W^2N^2\times 6W^2N^2}$ requires 171.1 GB of memory in single-precision floating-point format. To circumvent this, we implemented these methods without overlapping (then,  $\boldsymbol{\mathfrak{L}}\in\R^{6N^2\times 6N^2}$).}, i.e., $W=3$ and the expansion operator $\E=\I$.

\subsubsection{Results}
Table \ref{tab:ex1_PSNR} shows the maximum PSNR values for each method, chosen from the PSNRs corresponding to $\mu$ ranging from 0.05 to 0.5. The results indicate that EL-DSTV achieved the highest maximum PSNR throughout all of the test images.
Moreover, the EL-DSTV restored images with superior quality compared to those restored by the other methods. Fig. \ref{fig:ex1_images} shows that  DVTV-based methods (original/L-/EL-DVTV) suffered from the staircase effect, whereas STV-based methods (original/L-/EL-STV) exhibited color artifact. Compared with them, the EL-DSTV was free from these shortcomings. 

The effectiveness of the ER-LiGME model in reducing underestimation is depicted by the bar graphs shown in Fig. \ref{fig:ex1_histograms}, which are the histograms of $|\D\x^\star|$ in the pixels such that $|\D\x_{\mathrm{org}}|\geq 1$ at the pixel\footnote{For an $N\times N$ image $\x$, we evaluated $|\D\x|$ per pixel by $\|\mathbf{P}_{\mathrm{grad}}\D\x\|_1\in\R^{N^2}$, where $\mathbf{P}_{\mathrm{grad}}$ is a permutation matrix and $\|\cdot\|_1$ is a block-wise $\ell_1$-norm.}, where $\x^\star$ is the estimated image and $\x_{\mathrm{org}}$ is the original image. As the histograms in Fig. \ref{fig:ex1_histograms} shows, the ER-LiGME model effectively reduced underestimation for both proximable and non-proximable multilayered regularizers, whereas the LiGME model only for proximable ones.

\begin{figure}[t]
	\begin{minipage}{\linewidth}
		\centering
		\begin{subfigmatrix}{6}
			\subfigure[\footnotesize{Image 1}]{\includegraphics[width=0.156\linewidth]{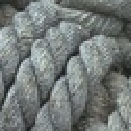}}
			\hspace{-0.75em}
			\subfigure[\footnotesize{\hspace{-0.45em}Observed}]{\includegraphics[width=0.156\linewidth]{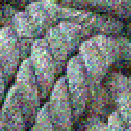}}
			\subfigure[\footnotesize{Image 2}]{\includegraphics[width=0.156\linewidth]{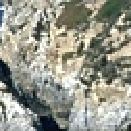}} 
			\hspace{-0.75em}
			\subfigure[\footnotesize{Image 3}]{\includegraphics[width=0.156\linewidth]{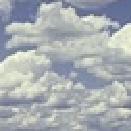}} 
			\hspace{-0.75em}
			\subfigure[\footnotesize{Image 4}]{\includegraphics[width=0.156\linewidth]{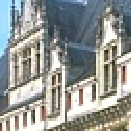}} 
			\hspace{-0.75em}
			\subfigure[\footnotesize{Image 5}]{\includegraphics[width=0.156\linewidth]{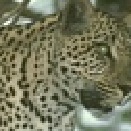}}
		\end{subfigmatrix}
		\caption{Test images for denoising.}
		\label{fig:ex1_images_org}
		\vspace{1em}
	\end{minipage}
	
	\begin{minipage}{\linewidth}
	\centering	
	\begin{subfigmatrix}{4}
		\subfigure[\parbox{0.1\linewidth}{DVTV\newline23.82}]{\includegraphics[width=0.248\linewidth]{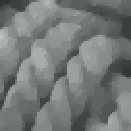}}
		\hspace{-1em}
		\subfigure[\parbox{0.15\linewidth}{L-DVTV\newline25.47}]{\includegraphics[width=0.248\linewidth]{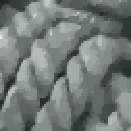}}
		\hspace{-1em}
		\subfigure[\parbox{0.15\linewidth}{EL-DVTV\newline25.37}]{\includegraphics[width=0.248\linewidth]{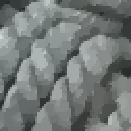}}
		\hspace{-1em}
		\subfigure[\parbox{0.1\linewidth}{STV\newline23.88}]{\includegraphics[width=0.248\linewidth]{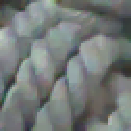}}
		\subfigure[\parbox{0.1\linewidth}{L-STV\newline24.43}]{\includegraphics[width=0.248\linewidth]{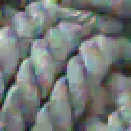}}
		\hspace{-1em}
		\subfigure[\parbox{0.15\linewidth}{EL-STV\newline24.55}]{\includegraphics[width=0.248\linewidth]{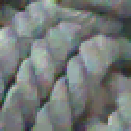}}
		\hspace{-1em}
		\subfigure[\parbox{0.1\linewidth}{DSTV\newline26.19}]{\includegraphics[width=0.248\linewidth]{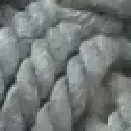}}
		\hspace{-1em}
		\subfigure[\parbox{0.15\linewidth}{\textbf{EL-DSTV}\newline \textbf{26.96}}]{\includegraphics[width=0.248\linewidth]{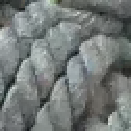}} 
	\end{subfigmatrix}
	\caption{Denoised images by each method ($\mu=0.3$, Image 1). PSNR [dB] is at the bottom of each image.}
	\label{fig:ex1_images}
	\end{minipage}
\end{figure}

\begin{table}[t]
	\caption{Maximum PSNRs [$\mathrm{dB}$] in denoising.}
	\centering
	\begingroup
	\setlength{\tabcolsep}{4.5pt}
	\begin{tabular}{l|rrrrr}
		\thline
		Method & Image 1 & Image 2 & Image 3 & Image 4 & Image 5 \\
		\thline
		DVTV & 26.73 & 25.20 & 28.06 & 25.36 & 25.77 \\
		L-DVTV & 26.20 & 25.00 & 27.72 & 25.52 & 25.61 \\
		EL-DVTV & 26.47 & 25.23 & 27.86 & 25.40 & 25.46 \\
		\hline
		STV & 24.81 & 22.52 & 26.79 & 22.74 & 23.16 \\
		L-STV & 24.43 & 22.49 & 26.48 & 23.29 & 23.02 \\
		EL-STV & 24.55 & 22.69 & 26.60 & 22.95 & 23.31 \\
		\hline
		DSTV & 26.93 & 25.28 & 28.37 & 25.62 & \textbf{25.93} \\
		\textbf{EL-DSTV} & \textbf{26.96} & \textbf{25.55} & \textbf{28.46} & \textbf{25.94} & \textbf{25.93} \\
		\thline
	\end{tabular}
	\endgroup
	\label{tab:ex1_PSNR}
\end{table}

\begin{figure}[tbhp]
\centering
\begin{subfigmatrix}{3}
	\subfigure[DVTV]{\includegraphics[width=0.3\linewidth]{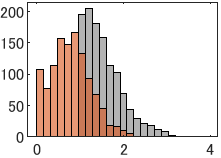}}
	\subfigure[STV]{\includegraphics[width=0.3\linewidth]{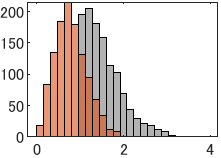}}
	\subfigure[DSTV]{\includegraphics[width=0.3\linewidth]{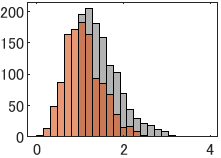}}
	\subfigure[L-DVTV]{\includegraphics[width=0.3\linewidth]{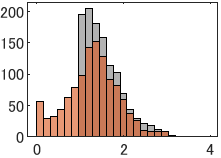}}
	\subfigure[L-STV]{\includegraphics[width=0.3\linewidth]{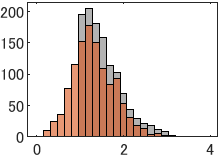}}
	\subfigure{\phantom{\includegraphics[width=0.3\linewidth]{figures/ex1/image4_1_Histgram_Dx__DSTV.eps}}} 
	\addtocounter{subfigure}{-1}
	\subfigure[EL-DVTV]{\includegraphics[width=0.3\linewidth]{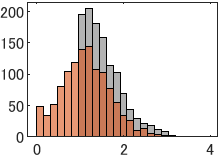}}
	\subfigure[EL-STV]{\includegraphics[width=0.3\linewidth]{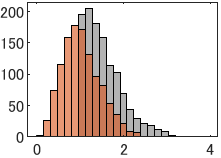}}
	\subfigure[\textbf{EL-DSTV}]{\includegraphics[width=0.3\linewidth]{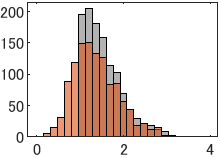}} 
\end{subfigmatrix}
\caption{Histograms of $|\D\x^\star|$ ($\mu=0.3$, Image 4). Gray graphs: ground truth, orange ones: estimated values by each method. }
\label{fig:ex1_histograms}
\end{figure}

\subsection{Compressed Sensing Reconstruction}

To provide another practical application of the ER-LiGME model, this section shows the results of compressed sensing reconstruction on the five images displayed in Fig. \ref{fig:ex2_images_org}. The size of each image was $128\times 128$ and the observation was given with $\boldsymbol{\Phi} = \mathbf{S}\widetilde{\mathbf{\Phi}}$ and $\sigma_{\mathrm{noise}}=0.1$, where $\widetilde{\mathbf{\Phi}}:\R^{3N^2}\rightarrow \R^{3N^2}$ is the noiselet transform \cite{COIFMAN2001} operator and $\mathbf{S} \in \mathbb{R}^{L\times 3N^2}$ $(L=0.4 \times 3N^2 )$ is a random downsampling matrix. We compared with six regularization methods based on the three regularizers: DVTV, STV, and DSTV. For all regularizers, we conducted experiments in two cases\footnote{Their LiGME versions were excluded from this experiment because they have difficulties in computing the matrix $\B$ with the large and dense observation process $\boldsymbol{\Phi}$ ($\neq\I$), which requires substantial memory for singular value decomposition to calculate $\boldsymbol{\Phi}_2^\dagger$. Even with the method using LDU decomposition\cite[Remark 1]{Chen2023}, the computational costs remain significant.}: 1) their original form and 2) using in ER-LiGME model (termed EL-DVTV/STV/DSTV). All of the optimization problems were solved by Algorithm \ref{alg:cLiGME}, where $\mathcal{C}:=[0,1]^{3N^2}$($N=128$), and the forms of  $\boldsymbol{\mathfrak{L}},\mathbf{C}$, and $\mathbf{B}$ for each method are listed in Table \ref{tab:ex1_prob}.

\subsubsection{Parameter Settings}
The regularization parameter $\mu$, the stop criteria $\varepsilon_{\mathrm{stop}}$, and the setting method of $\tau$ and $\sigma$ were the same as in Sec. \ref{sec:exDenoising}. The Moreau enhancement parameter for ER-LiGME model was set to $\theta = \rho = 4.00$. The methods that employ STV or DSTV were implemented with fully overlapping $3\times 3$ patches, i.e., $W=3,\,\E\in\R^{6W^2N^2\times 3N^2}$. The weights in the expansion matrix $\E$ were uniformly assigned as $1/W^2$.

\subsubsection{Results}
Table \ref{tab:ex2_PSNR} shows the maximum PSNRs with respect to the parameter $\mu$, similarly to Table \ref{tab:ex1_PSNR}.
The results demonstrate that EL-DSTV achieved the highest or comparable performance in all of the test images, even under the compression. Moreover, the images restored by EL-DSTV showed superior image quality to those restored by the other methods. Similar to Sec. \ref{sec:exDenoising}, the DVTV-based methods suffered from the staircase effect and the STV-based methods exhibited color artifact as shown in Fig. \ref{fig:ex2_images}. Compared with them, the EL-DSTV method overcame these problems.

The effectiveness of the ER-LiGME model in reducing underestimation is presented by the histograms shown in Fig. \ref{fig:ex2_histograms}, which are the histograms of $|\D\x^\star|$ in the pixels such that $|\D\x_{\mathrm{org}}|\geq 1$ at the pixel. As indicated in Fig. \ref{fig:ex2_histograms}, the ER-LiGME model effectively reduced underestimation for all regularizers. Notably, we can see that the histogram for EL-DSTV in Fig. \ref{fig:ex2_histograms} (f) is the closest to the true histogram.

\begin{figure}[t]
	\begin{minipage}{\linewidth}
		\centering
		\begin{subfigmatrix}{6}
			\subfigure[\footnotesize{Image 6}]{\includegraphics[width=0.156\linewidth]{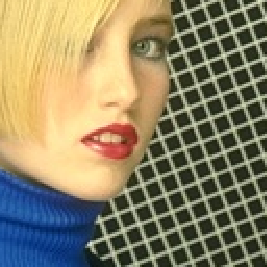}}
			\hspace{-0.75em}
			\subfigure[\footnotesize{\hspace{-0.45em}Observed}]{\includegraphics[width=0.156\linewidth]{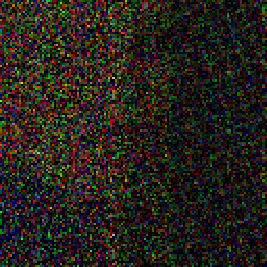}}
			\hspace{-0.5em}
			\subfigure[\footnotesize{Image 7}]{\includegraphics[width=0.156\linewidth]{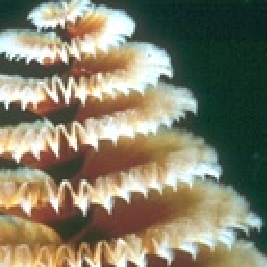}} 
			\hspace{-0.75em}
			\subfigure[\footnotesize{Image 8}]{\includegraphics[width=0.156\linewidth]{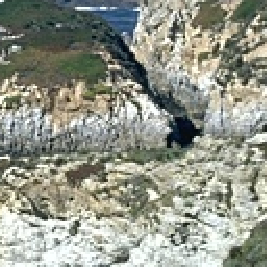}} 
			\hspace{-0.75em}
			\subfigure[\footnotesize{Image 9}]{\includegraphics[width=0.156\linewidth]{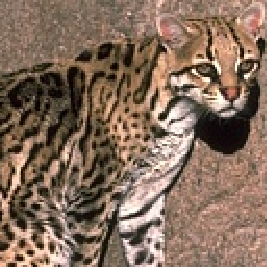}} 
			\hspace{-0.75em}
			\subfigure[\parbox{0.7\linewidth}{\footnotesize{\hspace{-0.25em}Image\hspace{0.125em}10}}]{\includegraphics[width=0.156\linewidth]{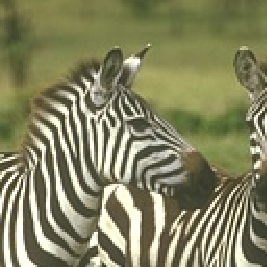}}
		\end{subfigmatrix}
		\caption{Test images for compressed sensing reconstruction.}
		\label{fig:ex2_images_org}
		\vspace{1em}
	\end{minipage}
	
	\begin{minipage}{\linewidth}
	\centering
	\begin{subfigmatrix}{3}
		\subfigure[\parbox{0.1\linewidth}{DVTV\newline23.64}]{\includegraphics[width=0.33\linewidth]{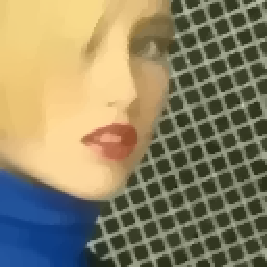}}
		\hspace{-0.5em}
		\subfigure[\parbox{0.1\linewidth}{STV\newline23.17}]{\includegraphics[width=0.33\linewidth]{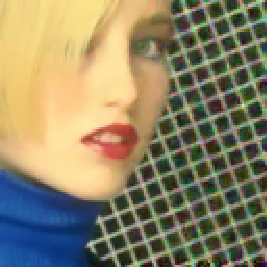}}
		\hspace{-0.5em}
		\subfigure[\parbox{0.1\linewidth}{DSTV\newline30.54}]{\includegraphics[width=0.33\linewidth]{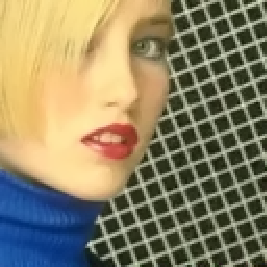}}
		\subfigure[\parbox{0.15\linewidth}{EL-DVTV\newline26.33}]{\includegraphics[width=0.33\linewidth]{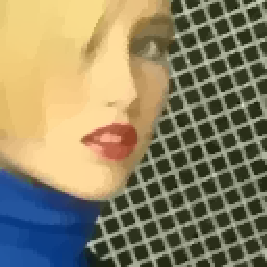}}
		\hspace{-0.5em}
		\subfigure[\parbox{0.15\linewidth}{EL-STV\newline23.96}]{\includegraphics[width=0.33\linewidth]{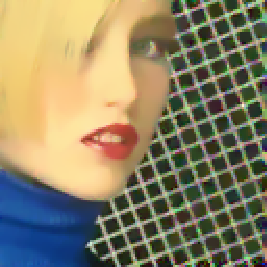}}
		\hspace{-0.5em}
		\subfigure[\parbox{0.15\linewidth}{\textbf{EL-DSTV}\newline\textbf{32.63}}]{\includegraphics[width=0.33\linewidth]{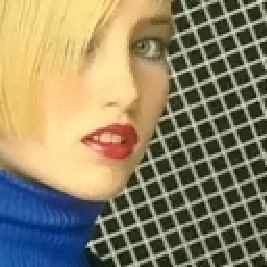}}
	\end{subfigmatrix}
	\caption{Reconstructed images by each method ($\mu=0.1$, image 6). PSNR [dB] are shown at the bottom of each image.}
	\label{fig:ex2_images}
	\end{minipage}
\end{figure}

\begin{table}[t]
\caption{Maximum PSNRs [$\mathrm{dB}$] in compressed sensing reconstruction.}
\centering
\begingroup
\setlength{\tabcolsep}{4.5pt}
\begin{tabular}{l|rrrrr}
	\thline
	Method & Image 6 & Image 7 & Image 8 & Image 9 & Image 10 \\
	\thline
	DVTV & 26.67 & 28.26 & 23.49 & 24.44 & 27.23 \\
	EL-DVTV & 29.17 & 30.39 & 25.04 & 26.07 & 30.53 \\
	\hline
	STV & 23.53 & 29.23 & 20.89 & 22.73 & 25.48 \\
	EL-STV & 23.96 & 29.45 & 20.79 & 22.33 & 25.71 \\
	\hline
	DSTV & 31.57 & 34.02 & 26.95 & \textbf{28.52} & 34.36 \\
	\textbf{EL-DSTV} & \textbf{32.64} & \textbf{34.11} & \textbf{27.44} & 28.49 & \textbf{35.56} \\
	\thline
\end{tabular}
\endgroup
\label{tab:ex2_PSNR}
\end{table}

\begin{figure}[tbhp]
	\centering
	\begin{subfigmatrix}{3}
		\subfigure[DVTV]{\includegraphics[width=0.3\linewidth]{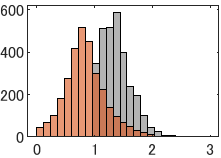}}
		\subfigure[STV]{\includegraphics[width=0.3\linewidth]{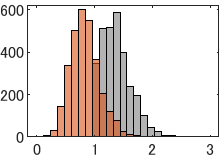}}
		\subfigure[DSTV]{\includegraphics[width=0.3\linewidth]{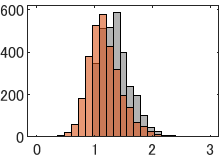}}
		\subfigure[EL-DVTV]{\includegraphics[width=0.3\linewidth]{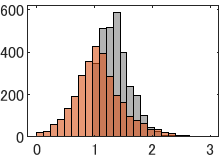}}
		\subfigure[EL-STV]{\includegraphics[width=0.3\linewidth]{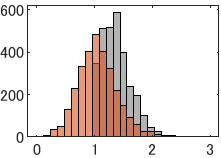}}
		\subfigure[\textbf{EL-DSTV}]{\includegraphics[width=0.3\linewidth]{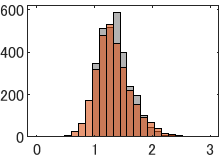}}
	\end{subfigmatrix}
	\caption{Histograms of $|\mathbf{D}\mathbf{x}^\star|$ ($\mu=0.1$, Image 6). Gray graphs: ground truth, orange ones: estimated values by each method. }
	\label{fig:ex2_histograms}
\end{figure}

\subsection{Principal Component Analysis}
As a further instance of practical applications, we demonstrated the extraction of principal components from the noisy signal $\mathbf{Y}=\mathbf{L}_s+\mathbf{S}\in \R^{M\times N}$, consisting of $N$ $M$-dimensional vectors obtained from $N$-times observations as shown in Fig. \ref{fig:ex3_images} (b), where $n$-th column of $\mathbf{L}_s$ is set as
\begin{align}
	[\mathbf{L}_s]_{m,n} = \begin{cases}
		1 & (sn\leq m \leq sn+7) \\
		0 & (\mathrm{otherwise})
	\end{cases}.
\end{align}
The test dataset includes 1) unshifted signals, 2) signals shifted by 1 sample (1-shift), and 3) signals shifted by 2 samples (2-shift), denoted as  $\mathbf{L}_0$,  $\mathbf{L}_1 \in \R^{32\times25}$, and  $\mathbf{L}_2\in \R^{32\times13}$, respectively. The observation was given with the sparse outliers $\mathbf{S}\in\R^{M\times N}$ that contains $1$s in the region of $\mathcal{N}_s = \{(m,n) \in \{1,2,\ldots,M\}\times\{1,2,\ldots,N\} \ |\  [\mathbf{L}_s]_{m,n}=0\}$ with the occurrence probability \(p_{\mathrm{noise}} = 0.1\). We evaluated four different regularization methods: the nuclear-norm (NN), the ASNN, the cLiGME NN (termed L-NN), and the ER-LiGME ASNN (termed EL-ASNN). All optimization problems were solved using Algorithm \ref{alg:cLiGME}, with $\mathcal{C}:=[0,1]^{MN}$. The forms of \(\boldsymbol{\mathfrak{L}}, \mathbf{C}\), and $\mathbf{B}$ for each method are listed in Table \ref{tab:ex1_prob}, where $\mathbf{T}$ is detailed in Appendix \ref{ap:examples_MR}.

\subsubsection{Parameter Settings}
The regularization parameter was set to $\mu=1$,  $\lambda_1=0.05,\,0.25,\,0.50,\,0.75,\,1.00,\,1.25$, and $\varepsilon_\mathbf{s}=MNp_{\mathrm{noise}}$. The Moreau enhancement parameter for the cLiGME model was set experimentally to $\theta = 0.10$, taking into account the balance between noise reduction and underestimation reduction, and that for the ER-LiGME model was $\theta = \rho = 1.00$. The stop criteria was set to $\varepsilon_{\mathrm{stop}} = 10^{-5}$. The setting method of $\tau$ and $\sigma$ were the same as in Sec. \ref{sec:exDenoising}.

\subsubsection{Results}
Despite the significant decrease of PSNR in the NN-based methods for shifted signals, the ASNN-based methods robustly maintained the performance against the shifts as shown in Table \ref{tab:ex3_PSNR}, which shows the maximum PSNRs selected from the results obtained for each the parameter $\lambda_1$. Then, the EL-ASNN achieved the highest maximum PSNR in both of the 1-shifted and 2-shifted signals. Fig. \ref{fig:ex3_images} shows that the EL-ASNN successfully removed the outliers.

The effectiveness of the ER-LiGME model in reducing underestimation is presented by the line graphs shown in Fig. \ref{fig:ex3_Linegraphs}, which are the sliced views of the extracted components at the half columns. As illustrated in Fig. \ref{fig:ex3_Linegraphs}, the EL-ASNN reduced underestimation in the ASNN, and the shape of the principal components was the closest to the true one.

\begin{figure}[t]
	\centering
	\begin{subfigmatrix}{3}
		\subfigure[Original]{\includegraphics[width=0.3\linewidth]{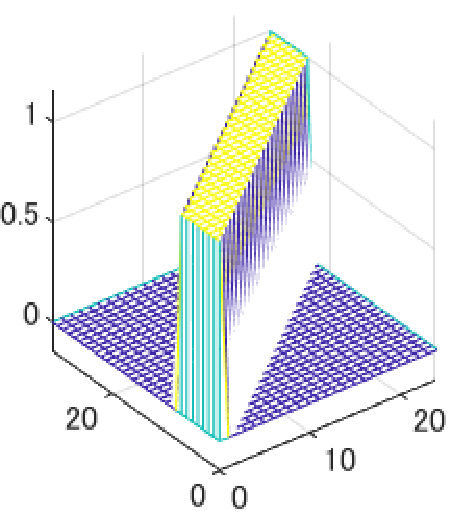}}
		\subfigure[Observed]{\includegraphics[width=0.3\linewidth]{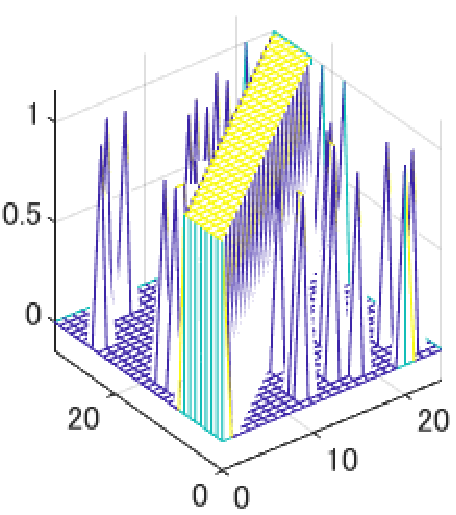}}
		\subfigure[\parbox{0.1\linewidth}{NN\newline13.84}]{\includegraphics[width=0.3\linewidth]{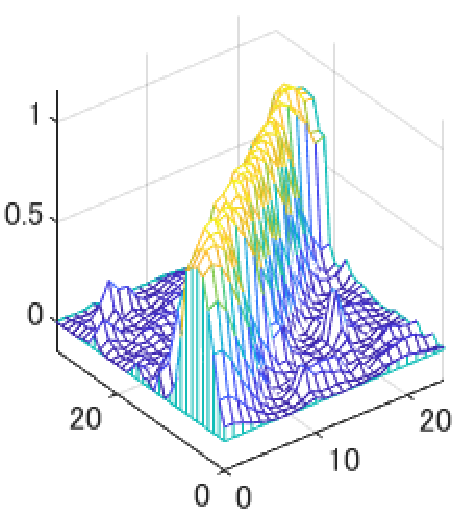}} 
		\subfigure[\parbox{0.1\linewidth}{L-NN\newline14.25}]{\includegraphics[width=0.3\linewidth]{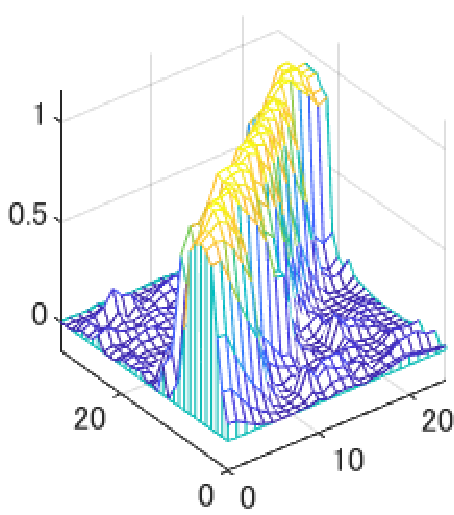}}
		\subfigure[\parbox{0.1\linewidth}{ASNN\newline20.17}]{\includegraphics[width=0.3\linewidth]{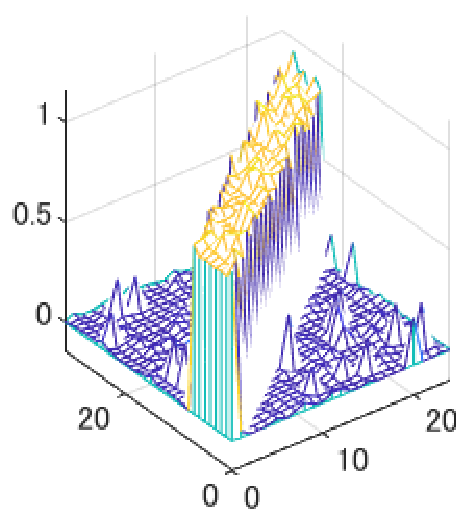}}
		\subfigure[\parbox{0.15\linewidth}{\textbf{EL-ASNN}\newline\textbf{22.17}}]{\includegraphics[width=0.3\linewidth]{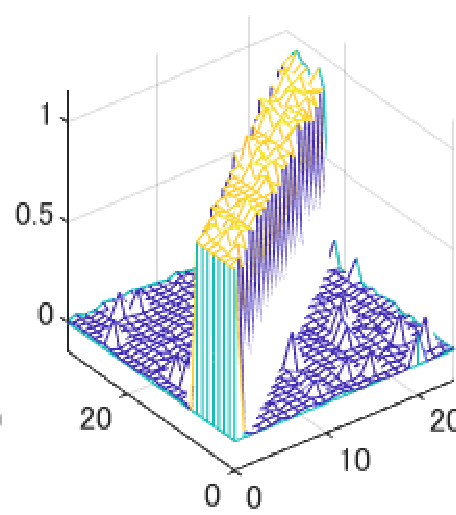}}
	\end{subfigmatrix}
	\caption{Extracted components of 1-shifted signal and PSNR [dB] ($\lambda_1=1.00$).}
	\label{fig:ex3_images}
	
	\vspace{1.5em}
	\includegraphics[width=0.75\linewidth,keepaspectratio=true]{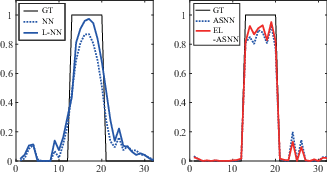}
	\caption{Sliced view of the extracted components of 1-shifted signal at the half column (GT: Ground truth, $\lambda_1=1.00$).}
	\label{fig:ex3_Linegraphs}
\end{figure}

\begin{table}[t]
\centering
\caption{Maximum PSNRs [$\mathrm{dB}$] in RPCA.}
\begingroup
\setlength{\tabcolsep}{4.5pt}
\begin{tabular}{l|llll}
	\thline
	& NN             & L-NN       & ASNN  & \textbf{EL-ASNN}  \\ \thline
	Shift: 0 & 34.42 & \textbf{35.86} & 24.59 & 24.66          \\
	Shift: 1 & 16.30          & 14.60          & 24.33 & \textbf{24.40} \\
	Shift: 2 & 15.53          & 14.93          & 23.43 & \textbf{23.51}
	\\\thline
\end{tabular}
\endgroup
\label{tab:ex3_PSNR}
\end{table}

\section{Conclusion}
\label{sec:conc}
In this paper, we proposed the signal recovery method with epigraphically-relaxed multilayered non-convex regularization, and the ER-LiGME model with GOE as a CPNR. In addition, we analyzed in Proposition \ref{Theorem1} and \ref{Theorem2} that applying ER to a multilayered non-convex regularization problem results in the same solution set as the original problem under the strictly-increasing property. Finally, experimental results showed that the two new non-convex regularizers, the ER-LiGME DSTV and ER-LiGME ASNN functions, not only reduced bias but also achieved high performance in denoising, compressed sensing reconstruction, and principal component analysis.


%

\appendices

\section{Proof of Lemma \ref{Lemma1}}\label{ap:poof_lemma1}
Let us prove Lemma \ref{Lemma1} by contradiction. Suppose $\mathcal{R}^{(k)}({\mathbf{z}}^{\star(k)}) \neq {\mathbf{z}}^{\star(k+1)}$ for $k=K-1$.
Then, there exists an index set $\mathcal{I}_K\subset\{1,...,N_K\}$ satisfying $\mathcal{R}^{(K-1)}({\mathbf{z}}^{\star(K-1)})\lneqq_{\mathcal{I}_K}{{\z}}^{\star(K)}$. For $\forall i_K\in\mathcal{I}_K$, we can choose $\varepsilon_{i_K}  \in (0,1)$ satisfying 
\begin{align} [\mathcal{R}^{(K-1)}({\mathbf{z}}^{\star(K-1)})]_{i_K} <  \varepsilon_{i_K} {z}_{i_K}^{\star(K)} < {z}_{i_K}^{\star(K)}.
\end{align}
Define $\widecheck{\mathbf{z}}^{(K)}$ by replacing the $i_K$-th element of ${\mathbf{z}}^{\star(K)}$ as
\begin{align}\label{eq:ztild1}
	\widecheck{\mathbf{z}}^{(K)} := \begin{bmatrix}  \ldots & {z}_{i_K-1}^{\star(K)} & \varepsilon_{i_K} {z}_{i_K}^{\star(K)} & {z}_{{i_K}+1}^{\star(K)} & \ldots \end{bmatrix}^\top.
\end{align} 
Since $\widecheck{\mathbf{z}}^{(K)}\lneqq_{\{i_K\}}{\mathbf{z}}^{(K)}$ and $\mathcal{R}^{(K)}$ has the strictly increasing property,  $\mathcal{R}^{(K)}(\widecheck{\mathbf{z}}^{(K)})<\mathcal{R}^{(K)}({{\mathbf{z}}}^{\star(K)})$ holds. This contradicts that ${{\mathbf{z}}}^{\star(K)}$ is the minimizer of Problem \eqref{eq:reform4}. 

Next, suppose that $\mathcal{R}^{(k)}({\mathbf{z}}^{\star(k)}) = {\mathbf{z}}^{\star(k+1)}$ holds for $k = K-1$, and does not for $k=K-2$, i.e., $\mathcal{R}^{(K-1)}({\mathbf{z}}^{\star(K-1)}) =  {\mathbf{z}}^{\star(K)},\ \mathcal{R}^{(K-2)}({\mathbf{z}}^{\star(K-2)}) \lneqq_{\mathcal{I}_{K-1}}  {\mathbf{z}}^{\star(K-1)}$ $(\exists\mathcal{I}_{K-1}\subset\{1,...,N_{K-1}\})$. 
Then, for $\forall i_{K-1}\in\mathcal{I}_{K-1}$, we can choose $\varepsilon_{i_{K-1}}  \in (0,1)$ that satisfies  
\begin{align}
	[\mathcal{R}^{(K-2)}({\mathbf{z}}^{\star(K-2)})]_{i_{K-1}} < \varepsilon_{i_{K-1}} {z}_{i_{K-1}}^{\star(K-1)} < {z}_{i_{K-1}}^{\star(K-1)}. 
\end{align}
Define $\widecheck{\mathbf{z}}^{(K-1)}$ as $\widecheck{{z}}^{(K-1)}_{i_{K-1}} = \varepsilon_{i_{K-1}} {z}_{i_{K-1}}^{\star(K-1)}$ and $\widecheck{{z}}^{(K-1)}_{i} = {z}_{i}^{\star(K-1)}$ ($i\neq i_{K-1}$), then $\widecheck{\mathbf{z}}^{(K-1)}\lneqq_{\{i_{K-1}\}}{\mathbf{z}}^{(K-1)}$.
Since $\mathcal{R}^{(K-1)}$ has the strictly increasing property, there exists one index $i_{K} \in \{1,\ldots,N_K\}$ such that $[\mathcal{R}^{(K-1)}(\widecheck{\mathbf{z}}^{(K-1)})]_{i_K}<[\mathcal{R}^{(K-1)}({{\mathbf{z}}}^{\star(K-1)})]_{i_K}={{z}}^{\star(K)}_{i_K}$, and we can choose $\varepsilon_{i_K}  \in (0,1)$ satisfying 
\begin{align} [\mathcal{R}^{(K-1)}(\widecheck{\mathbf{z}}^{(K-1)})]_{i_K} <  \varepsilon_{i_K} {z}_{i_K}^{\star(K)} < {z}_{i_K}^{\star(K)}.
\end{align}
Define $\widecheck{\mathbf{z}}^{(K)}$ as in \eqref{eq:ztild1}, then, by the same discussion, we can derive that
$
\mathcal{R}^{(K)}(\widecheck{\mathbf{z}}^{(K)}) < \mathcal{R}^{(K)}({{\mathbf{z}}}^{\star(K)})
$ and its contradiction. By repeating the same discussion, we can verify Lemma \ref{Lemma1}.

\section{Proof of Proposition \ref{Theorem1}}\label{ap:poof_ER}
First, we show $\mathcal{S}_\x\subset\widetilde{\mathcal{S}_\x}$ by contradiction. Suppose that there exists $\x^\star\in \mathcal{S}_\x \Rightarrow \x^\star\notin \widetilde{\mathcal{S}_\x}$. Let us define $\{\mathbf{z}^{\star(k)}\}_{k=K}^{2}$ as follows:
\begin{align}
	\mathbf{z}^{\star(k+1)}&:=\mathcal{R}^{(k)}(\mathbf{z}^{\star(k)})\ (2\leq k\leq K-1),\nonumber\\
	\mathbf{z}^{\star(\textcolor{black}{2})}&:=\mathcal{R}^{(1)}(\mathbf{A}\mathbf{x}^\star),
\end{align}
then, $( \mathbf{x}^\star, \{\mathbf{z}^{\star(k)}\}_{k=K}^{2})\in\mathcal{D}$ is the minimizer of \eqref{eq:reform2}. From the assumption ($( \mathbf{x}^\star, \{\mathbf{z}^{\star(k)}\}_{k=K}^{2})\notin\widetilde{\mathcal{S}}_{\mathbf{x}}\times\prod_{k=K}^{2}\widetilde{\mathcal{S}}_{\mathbf{z}^{(k)}}$), there exists $( \widetilde{\mathbf{x}}, \{\widetilde{\mathbf{z}}^{(k)}\}_{k=K}^{2})\in\widetilde{\mathcal{S}}_{\mathbf{x}}\times\prod_{k=K}^{2}\widetilde{\mathcal{S}}_{\mathbf{z}^{(k)}}(\subset\widetilde{\mathcal{D}})$ such that
\begin{align}\label{eq:proof_P1_1}
	\mathcal{U}(\x^\star)=\mathcal{V}(\mathbf{x}^\star, \{\mathbf{z}^{\star(k)}\}_{k=K}^{2})>\widetilde{\mathcal{V}}(\widetilde{\mathbf{x}}, \{\widetilde{\mathbf{z}}^{(k)}\}_{k=K}^{2})=\mathcal{U}(\widetilde{\x}).
\end{align}
The last equality in \eqref{eq:proof_P1_1} follows $( \widetilde{\mathbf{x}}, \{\widetilde{\mathbf{z}}^{(k)}\}_{k=K}^{2})\in{\mathcal{D}}$ from Lemma \ref{Lemma1}. This contradicts that $\x^\star\in\mathcal{S}_\x$, and thus, $\mathcal{S}_\x\subset\widetilde{\mathcal{S}}_\x$.

Next, we show $\widetilde{\mathcal{S}_\x}\subset\mathcal{S}_\x$ by contradiction similarly. Suppose that there exists $\x^\star\in\widetilde{\mathcal{S}_\x} \Rightarrow \x^\star\notin \mathcal{S}_\x$. From $\mathcal{D}\subset\widetilde{\mathcal{D}}$, $( \x^\star, \{\mathbf{z}^{\star(k)}\}_{k=K}^{2})\in\widetilde{\mathcal{S}}_{\mathbf{x}}\times\prod_{k=K}^{2}\widetilde{\mathcal{S}}_{\mathbf{z}^{(k)}}$ satisfies:
\begin{align}\label{eq:proof_P1_2}
	&\min_{( \mathbf{x}, \{\mathbf{z}^{(k)}\}_{k=K}^{2}) \in \mathcal{D} } \mathcal{V}(\mathbf{x}, \{\mathbf{z}^{(k)}\}_{k=K}^{2})\geq\nonumber\\
	&\min_{( \mathbf{x}, \{\mathbf{z}^{(k)}\}_{k=K}^{2}) \in \widetilde{\mathcal{D}} } \widetilde{\mathcal{V}}(\mathbf{x}, \{\mathbf{z}^{(k)}\}_{k=K}^{2})=\widetilde{\mathcal{V}}(\x^\star, \{\mathbf{z}^{\star(k)}\}_{k=K}^{2}).
\end{align}
Since $(\x^\star, \{\mathbf{z}^{\star(k)}\}_{k=K}^{2})\in\mathcal{D}$ follows from Lemma \ref{Lemma1}, $\forall( {\mathbf{x}}, \{{\mathbf{z}}^{(k)}\}_{k=K}^{2})\in\mathcal{D}$ satisfies
\begin{align}\label{eq:proof_P1_3}
	&\mathcal{U}(\x)=\mathcal{V}(\mathbf{x}, \{\mathbf{z}^{(k)}\}_{k=K}^{2})\geq{\mathcal{V}}(\x^\star, \{\mathbf{z}^{\star(k)}\}_{k=K}^{2})=\mathcal{U}(\x^\star).
\end{align}
This contradict the assumption $\x^\star\notin\mathcal{S}_\x$. Therefore, $\widetilde{\mathcal{S}}_\x\subset\mathcal{S}_\x$, and we complete the proof for Proposition \ref{Theorem1}\footnote{The detail of the proof for Theorem 1 in \cite{Kyochi2021} can also be verified by the same discussion in Appendix A.}.

\section{Proof of Proposition \ref{Theorem2}}\label{ap:poof_ERLiGME}
	\begin{enumerate}
		\item If $\mathbf{z}_{\mathrm{R}} = \mathbf{0}$, then Problem \eqref{eq:prob1cELG} is formulated as follows:
		\begin{align}\label{eq:T2_proof1}
			&\argmin_{\mathbf{x},\{\mathbf{z}^{(k)}\}_{k=K}^{2}} \frac{1}{2} \| \boldsymbol{\Phi}\mathbf{x} - \mathbf{y} \|_2^2 + \frac{\rho}{2} \| \mathbf{z}^{(K)} \|_2^2 + \mu \mathcal{R}_{\mathbf{B}}^{(K)}(\mathbf{z}^{(K)}), \nonumber\\
			&\ \mathrm{s.t.}\ \ \mathcal{R}^{(1)}(\mathbf{A}\mathbf{x}) \leq \mathbf{z}^{(2)},\ \mathcal{R}^{(k)}(\mathbf{z}^{(k)}) \leq \mathbf{z}^{(k+1)},
		\end{align}
		where $2\leq k\leq K-1$. Define $\widetilde{\mathcal{R}}^{(K)}(\mathbf{z}^{(K)}) := \frac{\rho}{2}\| \mathbf{z}^{(K)} \|_2^2 + \mu \mathcal{R}_\B^{(K)}(\mathbf{z}^{(K)})$, then $\widetilde{\mathcal{R}}^{(K)}(\mathbf{z}^{(K)}) $ is a strictly-increasing function. From the same discussion in the proof for Proposition 1, we can verify that $\mathcal{S}_{\mathbf{x}} = \widetilde{\mathcal{S}}_{\mathbf{x}}$.
		
		\item 
		First, we prove $( {\mathbf{x}}^\star, \{{\mathbf{z}}^{\star(k)}\}_{k=K}^{2})\in{\mathcal{D}}$ by contradiction. Suppose $\mathcal{R}^{(k)}(\mathbf{z}^{\star(k)}) \neq \mathbf{z}^{\star(k+1)}$ for $k=K-1$. From the assumption ($\mathbf{z}_{\mathrm{R}} \neq \mathbf{0}$ and $[\mathbf{z}_{\mathrm{R}}]_i < [{\mathbf{z}}^{\star(K)}]_i$ holds for any $i$ such that $[\mathbf{z}_{\mathrm{R}}]_i \neq 0$), there exists an index set $\mathcal{I}_K\subset\{1,...,N_K\}$ that satisfies $\mathcal{R}^{(K-1)}({\mathbf{z}}^{\star(K-1)})\lneqq_{\mathcal{I}_K}{{\z}}^{\star(K)}$. For $\forall i_K\in\mathcal{I}_K$, we can choose $\varepsilon_{i_K}  \in (0,1)$ satisfying
		\begin{align}
			0\leq\max\{[\mathcal{R}^{(K-1)}(&\mathbf{z}^{\star(K-1)})]_{i_K}, [\mathbf{z}_{\mathrm{R}}]_{i_K}\} \nonumber\\&\quad< \varepsilon_{i_K} {z}_{i_K}^{\star(K)} < {z}_{i_K}^{\star(K)}.
		\end{align}
		Define $\widecheck{\mathbf{z}}$ as $\widecheck{{z}}^{(K)}_{i_{K}} = \varepsilon_{i_{K}} {z}_{i_{K}}^{\star(K)}$ and $\widecheck{{z}}^{(K)}_{i} = {z}_{i}^{\star(K)}$ ($i\neq i_{K}$),  then $\widecheck{\mathbf{z}}^{(K)}\lneqq_{\{i_{K}\}}{\mathbf{z}}^{\star(K)}$.	Since $\mathcal{R}^{(K)}$ has the strictly increasing property, $\mathcal{R}^{(K)}(\widecheck{\mathbf{z}}) < \mathcal{R}^{(K)}(\mathbf{z}^{\star(K)})$ holds. Moreover,
		\begin{align}
			&\| \mathbf{z}^{\star(K)} - \mathbf{z}_{\mathrm{R}}\|_2^2 \nonumber\\
			>& \sum_{i \neq i_K}^{N_K} ( [\mathbf{z}^{\star(K)}]_i - [\mathbf{z}_{\mathrm{R}}]_i )^2 + ( \varepsilon_{i_K} {z}_{i_K}^{\star(K)} - [\mathbf{z}_{\mathrm{R}}]_{i_K} )^2\nonumber\\
			&=\ \| \widecheck{\mathbf{z}} - \mathbf{z}_{\mathrm{R}}\|_2^2.
		\end{align}    
		These contradict that $\mathbf{z}^{\star(K)}$ is the minimizer of \eqref{eq:T2_LiGME2}. For $2\leq k \leq K-2$, the same reasoning in the proof of Lemma \ref{Lemma1} can be derived. Therefore, $( {\mathbf{x}}^\star, \{{\mathbf{z}}^{\star(k)}\}_{k=K}^{2})$ satisfies the following equations:
		\begin{align}\label{eq:lemma1}
			\mathcal{R}^{(1)}(\mathbf{A}{\mathbf{x}}^\star) = {\mathbf{z}}^{\star(\textcolor{black}{2})},\quad\mathcal{R}^{(k)}({\mathbf{z}}^{\star(k)})=\textcolor{black}{{\mathbf{z}}^{\star(k+1)},}
		\end{align}
		where $2\leq k\leq K-1$, and thus, $( {\mathbf{x}}^\star, \{{\mathbf{z}}^{\star(k)}\}_{k=K}^{2})\in{\mathcal{D}}$.
		Then, by the same discussion in the proof of Proposition \ref{Theorem1}, we can verify $\mathcal{S}_{\mathbf{x}} = \widetilde{\mathcal{S}}_{\mathbf{x}}$.
	\end{enumerate}

\section{Examples of Non-Proximable Multilayered Regularizers}\label{ap:examples_MR}
\subsection{Decorrelated Structure-tensor Total Variation}\label{subsec:EpiDSTV}
For a vectorized $N\times N$ color image $\mathbf{x} \in \mathbb{R}^{3N^2}$, 
the DSTV $\|\mathbf{x}\|_{\mathrm{DSTV}}: \mathbb{R}^{3N^2} \rightarrow \mathbb{R}_{++}$ with $W \times W$ patches
is defined as:
\begin{align}
	&\|\mathbf{x}\|_{\mathrm{DSTV}} = \sum^{N^2}_{n=1} \frac{1}{2}\|\boldsymbol{\mathfrak{D}}_{n}^{(\mathrm{y})}\|_\ast +\sum^{N^2}_{n=1} \sqrt{\sum^{2}_{c=1}\|\boldsymbol{\mathfrak{D}}_{n}^{(c)}\|_\ast^2},\nonumber\\
	&\boldsymbol{\mathfrak{D}}_{n}^{(\mathrm{y})} =\begin{bmatrix}
		\mathbf{d}_{n}^{(\mathrm{y, v})} & \mathbf{d}_{n}^{(\mathrm{y,h})}
	\end{bmatrix},\ \boldsymbol{\mathfrak{D}}_{n}^{(c)} = \begin{bmatrix}
		\mathbf{d}_{n}^{(c,\mathrm{v})} & \mathbf{d}_{n}^{(c,\mathrm{h})}
	\end{bmatrix},
\end{align}
where $\mathbf{d}_{n}^{(\mathrm{y, v})} \in \mathbb{R}^{W^2}$ and $\mathbf{d}_{n}^{(\mathrm{y, h})} \in \mathbb{R}^{W^2}$ are the vertical and horizontal difference in the $W \times W$ local region centered at the $n$-th luma sample, and $\mathbf{d}_{n}^{(c, \mathrm{v})}$ and $\mathbf{d}_{n}^{(c, \mathrm{h})}$ are those of the chroma sample. 
The DSTV can be formulated by using a (possibly non-proximable) 3-layered mixed norm as follows: 
\begin{align}\label{eq:defDSTV}
	&\|\mathbf{x}\|_{\mathrm{DSTV}}= \|\mathbf{PED}\mathbf{C}_{\mathrm{D}}\mathbf{x}\|_{\ast, 2, 1}^{(W^2,1,2)},\nonumber\\
	&\|\cdot \|_{\ast, 2, 1}^{(W^2,1,2)}:= \| \cdot \|_{2, 1}^{(1,2)} \circ \|\cdot\|_\ast^{(W^2)}: \mathbb{R}^{6W^2N^2} \rightarrow \mathbb{R}_{++},\nonumber\\
	&\| \cdot \|_{\ast}^{(W^2)}: \mathbb{R}^{6W^2N} \rightarrow \mathbb{R}_{++}^{3N^2},\ \mathbf{x} = \begin{bmatrix}
		\mathbf{x}_1^{\top} ,\ \cdots,\ \mathbf{x}_{3N^2}^{\top}
	\end{bmatrix}^\top \mapsto\nonumber\\
	&\|\mathbf{x}\|_\ast^{(W^2)} = \begin{bmatrix}
		\| \mathrm{vec}_{(W^2,2)}^{-1}(\mathbf{x}_1) \|_\ast,\  \cdots,\ \| \mathrm{vec}_{(W^2,2)}^{-1}(\mathbf{x}_{3N^2}) \|_\ast
	\end{bmatrix}^\top,\nonumber\\
	&\| \cdot \|_{2, 1}^{(1,2)}: \mathbb{R}^{3N} \rightarrow \mathbb{R}_{++},\ \mathbf{x} \mapsto \|\mathbf{x}\|_{2, 1}^{(1,2)} = \frac{1}{2} \| \mathbf{x}_{\mathrm{y}} \|_{2,1}^{(1)} + \|\mathbf{x}_c \|_{2,1}^{(2)}\nonumber\\
	& (\mathbf{x}_{\mathrm{y}} \in \mathbb{R}^{N},\ \mathbf{x}_c\in \mathbb{R}^{2N}),
\end{align}
where $\mathbf{C}_{\mathrm{D}}=\mathbf{C}_0 \otimes \mathbf{I}_{[N^2]} \in \mathbb{R}^{3N^2 \times 3N^2}$, $\mathbf{C}_0 \in \mathbb{R}^{3\times 3}$ is the DCT matrix to convert the RGB color space to the luma-chroma one, $\mathbf{D} = \mathrm{diag}(\mathbf{D}_{\mathrm{vh}}, \mathbf{D}_{\mathrm{vh}}, \mathbf{D}_{\mathrm{vh}}) \in \mathbb{R}^{6N^2 \times 3N^2}$, $\mathbf{D}_{\mathrm{vh}} = \begin{bmatrix}
	\mathbf{I}_{[N^2]} \otimes \mathbf{D}_0^\top & \mathbf{D}_0^\top\otimes \mathbf{I}_{[N^2]} 
\end{bmatrix}^\top  \in \mathbb{R}^{2N^2 \times N^2}$ is the vertical and horizontal difference matrix, $\mathbf{D}_0$ is the difference operator extended with zeros in the last row, i.e.,
\begin{align}
	\mathbf{D}_0 := \begin{bmatrix}
		1 & -1 & & \\
		& \ddots & \ddots & \\
		&  & 1 & -1 \\
		0 & 0 & ... & 0 \\
	\end{bmatrix}\in \R^{N\times N},
\end{align}
and $\mathbf{E} \in \mathbb{R}^{6W^2N^2 \times 6N^2}$ is an expansion operator that duplicates all the gradients $\mathbf{PD}\mathbf{C}_{\mathrm{D}}\mathbf{x}$ in fully overlapping patches with applying weights to each location in the patch, $\mathbf{P}$ denotes a permutation matrix. 

\subsection{Amplitude Spectrum Nuclear Norm}
\label{subsec:ASNN}
For a matrix $\mathbf{L} \in \mathbb{R}^{M\times N}$, 
the ASNN regularizer $\|\mathbf{L}\|_{\mathrm{ASNN}}: \mathbb{R}^{M\times N} \rightarrow \mathbb{R}_{++}$ is defined as 
\begin{align}\label{eq:defASNN}
	&\|\mathbf{L}\|_{\mathrm{ASNN}} = \| | \mathbf{WL} | \|_* = \| | 
	(\mathbf{W}^{[c]} - j\mathbf{W}^{[s]}) \L
	| \|_*,
\end{align}
where $|\cdot |$ is the element-wise absolute value operation,  $\mathbf{W}\in\mathbb{C}^{M\times M}$ is the normalized DFT matrix ($[\mathbf{W}]_{m,n} := \frac{1}{\sqrt{M}}\exp{(j\frac{2\pi}{M}mn)}$, $j = \sqrt{-1}$), and $\mathbf{W}^{[c]}, \mathbf{W}^{[s]}$ are the real and imaginary parts of $\mathbf{W}$ respectively. 
The ASNN can be formulated by a (possibly non-proximable) 2-layered mixed norm for $\boldsymbol{\ell}=\mathrm{vec}(\mathbf{L})$ as follows: 
\begin{align}\label{eq:defASNN2}
	&\|\boldsymbol{\ell}\|_{\mathrm{ASNN}} = \| \mathbf{T}\boldsymbol{\ell} \|_{2,*}^{(M,N)},\nonumber\\
	& \| \cdot \|_{2,*}^{(M,N)} := \| \mathrm{vec}^{-1}_{(M,N)}(\cdot) \|_{*} \circ  \| \cdot \|_{2}^{(M,N)} : \mathbb{R}^{2MN} \rightarrow \mathbb{R}_{++},\nonumber\\
	& \| \cdot \|_{2}^{(M,N)}: \mathbb{R}^{2MN} \rightarrow \mathbb{R}^{MN},\nonumber\\
	&\mathbf{x}:= \begin{bmatrix}
		\mathbf{x}_{1,1}^\top & ... & \mathbf{x}_{M,1}^\top & ... & ... & \mathbf{x}_{1,N}^\top & ... & \mathbf{x}_{M,N}^\top
	\end{bmatrix}^\top \mapsto\nonumber\\
	&\| \mathbf{x} \|_{2}^{(M,N)} = \begin{bmatrix}
		\|\mathbf{x}_{1,1}\|_2 & ... & \|\mathbf{x}_{1,N}\|_2 
		& ... & ... & \|\mathbf{x}_{M,N}\|_2
	\end{bmatrix}^\top,\nonumber\\
	&\mathbf{T}\boldsymbol{\ell} = \mathbf{Q}\begin{bmatrix}
	\I_{[N]}\otimes\mathbf{W}^{[c]} \\ -\I_{[N]}\otimes \mathbf{W}^{[s]}
	\end{bmatrix}\boldsymbol{\ell}\nonumber\\
	&\hspace{1.5em}= \begin{bmatrix}
	\mathbf{t}_{1,1}^\top & ... & \mathbf{t}_{M,1}^\top & ... & ... & \mathbf{t}_{1,N}^\top & ... & \mathbf{t}_{M,N}^\top
	\end{bmatrix}^\top,\hspace{-0.5em}
\end{align}
where $\mathbf{t}_{m,n},\,\mathbf{x}_{m,n}\in\R^2$ and $\mathbf{Q}$ is a permutation matrix.

\section{Applying for Single-Layered Regularization}\label{ap:singleReg}
The ER-LiGME model in \eqref{eq:prob1cELG} for a single-layered regularizer $\mathcal{R}^{(1)}(\mathbf{A}\mathbf{x})$ is defined by the following settings:
\begin{align}
&\argmin_{\widehat{\mathbf{x}}}\frac{1}{2}\| {\widehat{\boldsymbol{\Phi}}}\widehat{\mathbf{x}} - \widehat{\mathbf{y}} \|_2^2 + \mu 
\widehat{\mathcal{R}}_{\mathbf{B}}^{(1)}(\widehat{\mathbf{x}}) + \iota_{\mathrm{Id} \cap \mathcal{C} }(\mathbf{C}\widehat{\mathbf{x}}), \nonumber\\
&\widehat{\boldsymbol{\Phi}} = \mathrm{diag}(\boldsymbol{\Phi},  \sqrt{\rho}\mathbf{I} ),\ \widehat{\mathbf{x}} = \begin{bmatrix} \mathbf{x}^\top & \mathbf{z}^{\top}\end{bmatrix}^\top,\nonumber \\
&\mathbf{C}\widehat{\mathbf{x}}  =\begin{bmatrix} (\mathbf{A}\mathbf{x})^\top & \mathbf{z}^{\top} & \mathbf{x}^{\top} \end{bmatrix}^\top, 
\end{align}
where $\mathrm{Id} = \{ (\mathbf{x}, \mathbf{y})\ |\ \mathbf{x}= \mathbf{y} \}$. Since $\mathrm{Id}$ is convex, the relaxation is not required and, in optimization, the projection to $\mathrm{Id}$ ($(\mathbf{x}^\star, \mathbf{y}^\star) = \mathcal{P}_{\mathrm{Id}}(\mathbf{x}, \mathbf{y})$) can be easily computed as
\begin{align}
([\mathbf{x}^\star]_n, [\mathbf{y}^\star]_n) =&  \begin{cases}
({x}_n, {y}_n) & (({x}_n, {y}_n) \in \mathrm{Id}) \\
\left( \frac{x_n+y_n}{2}, \frac{x_n+y_n}{2} \right) & (\mathrm{otherwise})
\end{cases}.
\end{align}
By setting $\mathbf{B} = \mathrm{diag}\left(\mathbf{O}, \sqrt{\frac{\theta}{\mu}}\mathbf{I}\right)$ with the parameters $0\leq \theta \leq \rho \leq \infty$ as in \eqref{eq:settingB}, the overall convexity condition can be satisfied easily. 



\ifCLASSOPTIONcaptionsoff
\newpage
\fi



%

	
	

\bibliographystyle{IEEEtran}
\end{document}